\DeclareMathOperator*{\argmax}{arg\,max}
\newcommand{\lstar}{\ensuremath{\textrm{L}^*}}
\newcommand{\ouralgorithm}{\textsc{Quintic}}
\newcommand{\obstable}{\langle S, E, T; \mathcal{C}, \Gamma\rangle}
\newcommand{\orow}[1]{\textbf{\textit{row}}(#1)}
\newcommand{\orows}[1]{\textbf{\textit{rows}}(#1)}
\newcommand{\repr}[1]{\llbracket #1\rrbracket}
\begin{document}
\title{Learning Quantitative Automata Modulo Theories}
%
%
\author{Eric Hsiung, Swarat Chaudhuri, and Joydeep Biswas}
\authorrunning{E. Hsiung et al.}
%
\institute{University of Texas at Austin\\\email{\{ehsiung,swarat,joydeepb\}@cs.utexas.edu}}
\maketitle              

\begin{abstract}
Quantitative automata are useful representations for numerous applications, including modeling probability distributions over sequences to Markov chains and reward machines. 
Actively learning such automata typically occurs using explicitly gathered input-output examples under adaptations of the \lstar algorithm. 
However, obtaining explicit input-output pairs can be expensive, and there exist scenarios, including preference-based learning or learning from rankings, where providing constraints is a less exerting and a more natural way to concisely describe desired properties. Consequently, we propose the problem of learning deterministic quantitative 
automata from sets of constraints over the valuations of input sequences. 
We present \ouralgorithm{}, an active learning algorithm, wherein the learner infers a valid automaton through deductive reasoning, by applying a theory to a set of currently available constraints and an assumed preference model and quantitative automaton class. 
\ouralgorithm{} performs a complete search over the space of automata, and is guaranteed to be minimal and correctly terminate. 
Our evaluations utilize theory of rationals in order to learn summation, discounted summation, product, and classification quantitative automata, and indicate \ouralgorithm{} is effective at learning these types of automata.
\keywords{Automata Learning, Deductive Reasoning, Constraint Solving}
\end{abstract}

\section{Introduction}

Quantitative automata~\cite{qvsw,chalupa2024quak} are an extremely expressive family of automata that use a valuation function to map sequences to a scalar value. Prominently used valuation functions used include average, discounted summation~\cite{discounted}, product, and others~\cite{qvsw}. They naturally capture a subset of weighted automata \cite{qvsw}, and they can be used to model deterministic finite automata by choosing an appropriate valuation function. They have been used for modeling probability distributions over sequences \cite{Weiss2019}, Markov chains~\cite{markovchain}, and discounted returns from reward machines \cite{icarte2018}.

The majority of methods for learning quantitative automata rely on actively obtained input-output examples. Explicit collection of input-output examples can be expensive, especially for rare behaviors, or if the input-output sequences are dangerous to a system or user. In contrast, collecting preference and ranking information about sequences has become increasingly prevalent in a variety of fields, including in domains such as robotics~\cite{narcomey2024learninghumanpreferencesrobot,MACCARINI20227,maroto2022}, recommendation systems~\cite{xueprefrec2023kdd, de2009preference}, reinforcement learning~\cite{wirth2016model,wirth2017survey,biyik2024batch,Sadigh2017ActivePL,christiano2017deep}, and tuning language models~\cite{stiennon2020learning}. Such comparison-based data is being embraced due to its ease of collection compared to input-output examples. When translated into constraints, preference and ranking information can be used to concisely describe desired system properties. Despite the move towards learning from such data, few methods have explored learning automata from such comparison information.

Consequently, we consider the problem of learning deterministic quantitative automata from comparisons of the valuations of sequences, and present an active learning solution. In our algorithm, \ouralgorithm{} (\textbf{Qu}antitative Automata \textbf{In}ference from \textbf{T}heory and \textbf{I}ncremental \textbf{C}onstraints), the learner applies deductive reasoning, using a theory, preference model, and recursive valuation function in order to infer the correct minimal automaton satisfying the currently known constraints. Search completeness of an infinite hypothesis space and minimalism of the result are guaranteed by iterative deepening, with search and inference guided by a MaxSMT objective. As such, \ouralgorithm{} is guaranteed to find the minimal automaton and terminate. In our empirical experiments, we study how \ouralgorithm{} scales, and consider ablations to the algorithm, since search completeness is expensive to maintain.

In summary, we contribute: (a) \ouralgorithm{}, an algorithm for learning deterministic quantitative automata from actively obtained comparison information, (b) theoretical guarantees of minimalism and search completeness, and (c) numerous empirical results illustrating the ability of \ouralgorithm{} to learn a variety of deterministic quantitative automata under various valuation functions.


\section{Related Work}
Before presenting \ouralgorithm{}, we first review existing work on actively learning automata.

\textbf{Active Automata Learning}. The majority of active automata learning algorithms are adaptations of the \lstar{} algorithm~\cite{Angluin87}, applied to learning different types of automata, including deterministic finite automata~\cite{Angluin87} which model regular languages, weighted automata~\cite{Balle2015LearningWA,Bergadano1994LearningBO} in which sequences are valued according to path and transition weights including learning deterministic weighted automata~\cite{Weiss2019} for representing distributions over sequences, symbolic automata~\cite{symbolicAutomata,Argyros2018TheLO} in which predicates summarize state transitions, lattice automata~\cite{fismanlattice} which represent partial orderings over a set of sequences, and reward machines~\cite{tappler2019based,GaonB20_nonmarkovian,xu_lstar,dohmen-2022-icaps} which model certain types of non-Markovian reward functions. All of the aforementioned works utilize actively obtained concrete input-output example data.

\textbf{Automata Learning from Preferences}. In contrast, few works consider actively learning automata from \emph{non-concrete data} such as preference or constraint information. Deterministic finite automata have been learned from a combination of preference information and input-output examples~\cite{shah2023learning}, albeit using a passive automaton learning method based on state-merging, and Moore machines have been learned from preference information and feedback using \textsc{Remap}~\cite{remap}, an \lstar{}-based algorithm, featuring a symbolic observation table rather than a concrete one. Both aforementioned algorithms assume that the entire space of input sequences can be partitioned into distinct sets, with each set corresponding to a class, and where an ordering exists over the set of classes. Both the binary class case~\cite{shah2023learning} and the multiclass case~\cite{remap} are handled, and both utilize the simplistic preference model of ordering sequences based on which class the sequence belongs to.

In contrast, \ouralgorithm{} learns certain types of deterministic quantitative automata, in part by utilizing preference models which depend on the valuation function used, such as summation, discounted summation, and product, and it also subsumes the classification preference model. Additionally, \ouralgorithm{} addresses the problem of handling ambiguous variable equalities, a situation which never occurs under the classification preference models of the other two algorithms~\cite{shah2023learning,remap}, since variable equivalence or inequivalence can always be determined from a classification preference model. Algorithmically, \ouralgorithm{} is similar to \textsc{Remap} as an \lstar{}-based algorithm, since both rely on determining variable equivalences from constraints obtained from preference and equivalence queries in order to hypothesize a deterministic automata. However, due to the possibility of variable equivalence ambiguity, \ouralgorithm{} cannot apply the greedy unification of \textsc{Remap}, so instead, \ouralgorithm{} incorporates backtracking and search in the algorithm. Prior to delving into \ouralgorithm{}, we first review some notation and concepts used in the algorithm.


\section{Background}
We first provide some background on notation and sequences, then describe the definition of deterministic quantitative automata used in this paper. We then cover core concepts which are critical to understanding \ouralgorithm{}.

\textbf{Indicator Function}: We denote the indicator function by $\mathbf{1}_{[\phi]}$, which resolves to $1$ if the statement $\phi$ resolves to true, and $0$ if $\phi$ resolves to false. For example, $\mathbf{1}_{[x\in\mathbb{R}\land x^2 < 1]}$ resolves to $1$ if $-1 < x < 1$ and $x$ is a real number, but in all other cases resolves to $0$.

\textbf{Equivalence, Equivalence Classes, and Representatives}: We use $\equiv$ to denote equivalence between a pair of objects. We write $a\equiv b$ to indicate that objects $a$ and $b$ are equivalent. An \emph{equivalence class} is set where all elements are equivalent to each other. A single element called a \emph{representative} can be used to represent the equivalence class. We denote $\repr{x}$ to be the representative of element $x$. For example, let $\{x_0,x_1,x_2,x_3,x_4\}$ be an equivalence class of variables, and let $x_0$ be the representative. Then $\repr{x_k}=x_0$ for $k=0,1,2,3,4$. Additionally, we state that two row vectors $\mathbf{r_1}=\sum_{i=1}^{d} a_i\hat{\mathbf{e}}_i$ and $\mathbf{r_2}=\sum_{i=1}^{d} b_i\hat{\mathbf{e}}_i$, with basis vectors $\hat{\mathbf{e}}_1,\dots,\hat{\mathbf{e}}_d$ in $\mathbb{R}^d$, are equivalent if the representatives of each of their components is the same. That is, $\mathbf{r_1}\equiv \mathbf{r_2}$ if $\repr{a_i}=\repr{b_i}$ for $i=1,\dots,d$.

\textbf{Alphabets and Sequences}: An alphabet is a set of elements from which sequences can be constructed. If $\Sigma$ is an alphabet, then we let $(\Sigma)^k$ represent the set of all sequences of length $k$ constructed using elements of $\Sigma$. We represent sequences constructed from $\Sigma$ of any length at least $0$ by $\Sigma^*=\bigcup_{i=0}^\infty(\Sigma)^i$. A sequence of length $0$ is denoted by $\varepsilon$. The length of a sequence $s$ is given by $|s|$. Concatenating a sequence $b$ to a sequence $a$ is denoted by $a\cdot b$, and $|a\cdot b| = |a|+|b|$. We often refer to two alphabets in this paper: an input alphabet, denoted by $\Sigma^I$, and an output alphabet, denoted by $\Sigma^O$.

\textbf{Deterministic Quantitative Automaton}: Given a set of states $Q$, initial state $q_0$, finite input and output alphabets $\Sigma^I$ and $\Sigma^O$, transition function $\delta: Q\times \Sigma^I\rightarrow Q$, labeling function $L:Q\rightarrow \Sigma^O$, and value function $\mathbf{Val}:\mathbb{R}^*\rightarrow\mathbb{R}$, a deterministic quantitative automaton is the tuple $\mathcal{A}=\langle Q,q_0,\Sigma^I,\Sigma^O,\delta,L,\mathbf{Val}\rangle$. Given an input sequence $s\in (\Sigma^I)^k$ of length $k\geq 1$, an output sequence $t\in(\Sigma^O)^{k+1}$ is generated according to the sequence of labeled states that $s$ transitions through. That is, $t=[L(q_0),L(\delta(q_0,s_{:1})),\cdots,L(\delta(q_0,s_{:k}))]$, where $s_{:j}$ represents the $j$-length prefix of $s$, and $\delta: Q\times(\Sigma^I)^*\rightarrow Q$ is the extended transition function. The valuation function $\mathbf{Val}$ then maps $t$ to a number. For notational convenience, we overload to $\mathbf{Val}:(\Sigma^I)^*\rightarrow\mathbb{R}$. Therefore, a deterministic quantitative automata $\mathcal{A}$ maps input sequences to real numbers, and we write $\mathcal{A}(s)=\mathbf{Val}(s)=v$ to indicate that $v\in\mathbb{R}$ is the value of input sequence $s$ when processed by $\mathcal{A}$, or likewise by the valuation function.

\textbf{Symbolic Observation Table}. A \emph{symbolic observation table} $\mathcal{O}$ is a 2-dimensional array of variables, denoted by $\obstable$, with rows indexed by elements of the prefix set $S$ and columns indexed by elements of the suffix set $E$, and with entries given by $T$, the empirical observation function $T: (S\cup (S\cdot\Sigma^I))\cdot E\rightarrow \Gamma$ which maps a concatenation of prefix and suffix to a variable in the context $\Gamma$. The context $\Gamma$ is a mapping from sequences to variables. If $\Gamma[t]$ is the variable associated with sequence $t$, then for all prefix-suffix $s\cdot e$ combinations in $\mathcal{O}$ for which $t = s\cdot e$, the relation $T(s\cdot e)=\Gamma[t]$ holds. The set $\mathcal{C}$ represents a set of collected constraints expressed over the variables in $\Gamma$. Figure \ref{fig:lstar-remap} depicts an example symbolic observation table, along with table operations discussed next. A symbolic observation table exhibiting certain properties can be used to construct the states $Q$, initial state $q_0$, transition function $\delta: Q\times\Sigma^I\rightarrow Q$, and labeling function $L: Q\rightarrow \Sigma^O$ of a deterministic quantitative automaton. The required properties are the following:
\begin{definition}[Unified] A symbolic observation table $\mathcal{O}=\obstable$ is \textbf{unified} if for all $s\cdot e\in (S\cdot(S\cdot\Sigma^I))\cdot E$ it holds that $T(s\cdot e)=\repr{\Gamma[s\cdot e]}$. That is, the table is populated using only the representatives of variables.
\end{definition}
\begin{definition}[Closed] Let $\mathcal{O}=\obstable$ be a symbolic observation table. Let $\orow{s}$ be the row in $\mathcal{O}$ indexed by the prefix $s$. Define the set of rows $\orows{S}=\{\orow{s}|s\in S\}$ to be the set of all the rows indexed by prefixes in $S$. The symbolic observation table $\mathcal{O}$ is \textbf{closed} if $\orows{S\cdot\Sigma^I}\subseteq\orow{S}$.
\end{definition}
\begin{definition}[Consistent] A symbolic observation table $\mathcal{O}=\obstable$ is \textbf{consistent} if for all prefix pairs $(s_1, s_2)$ in $S\times S$ where $\orow{s_1}\equiv\orow{s_2}$, it holds that all their transitions are equivalent to each other: for all $\sigma\in\Sigma^I$, $\orow{s_1\cdot \sigma}\equiv \orow{s_2\cdot\sigma}$.
\end{definition}
If a table is unified, closed, and consistent, then the transition function $\delta: Q\times\Sigma^I\rightarrow Q$ can be constructed, since consistency corresponds with a deterministic transition function, and closedness means that $Q$ is a closed set of states. The construction is the following:
\begin{align}\label{eq:states}
    Q &=\{\orow{s} | \text{ for all } s\in S\} \text{ (set of states)}\\
    q_0&=\orow{\varepsilon} \text{ (initial state)}\\
    \delta(\orow{s},\sigma)&=\orow{s\cdot \sigma} \text{ for all } s\in S \text{ and } \sigma\in\Sigma^I \text{ (transition function)}\\
    \tilde{L}(\orow{s})&=T(s\cdot\varepsilon) \text{ (symbolic output function)}\\
    \Lambda &: \{\Gamma[s\cdot e]\mapsto r_{s\cdot e} | r_{s\cdot e}\in \Sigma^O\} \text{ (solution or model satisfying } \mathcal{C}\text{)}\\
    L(\orow{s})&=\Lambda[T(s\cdot \varepsilon)] \text{ (concrete output function)}
\end{align} Note that $\Lambda$ is an assignment of values of $\Sigma^O$ to the variables in the context $\Gamma$ such that the assignment satisfies the constraints in $\mathcal{C}$.
\begin{figure}[t]
    \centering
    \begin{tabular}{c|c}
     \includegraphics[width=0.49\textwidth]{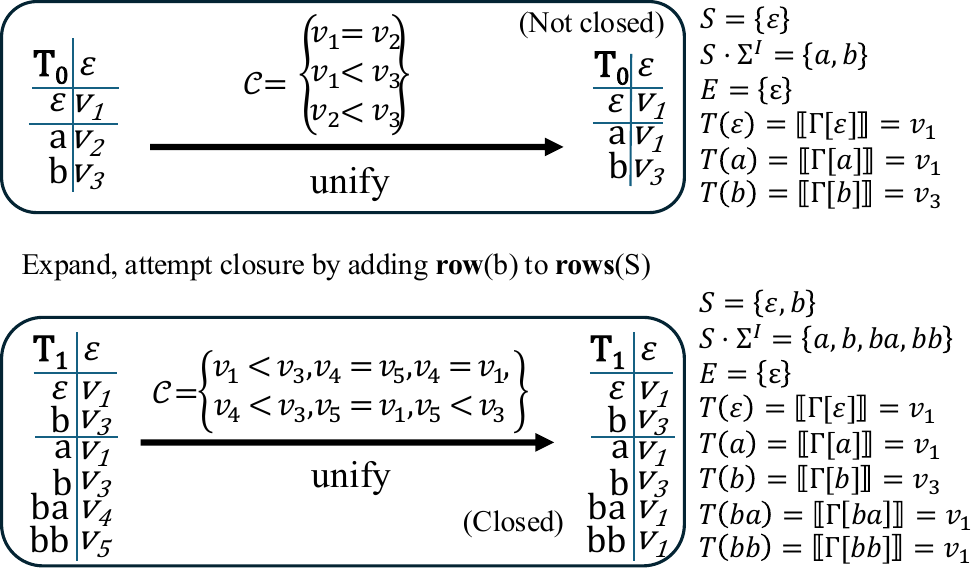} & \includegraphics[width=0.49\textwidth]{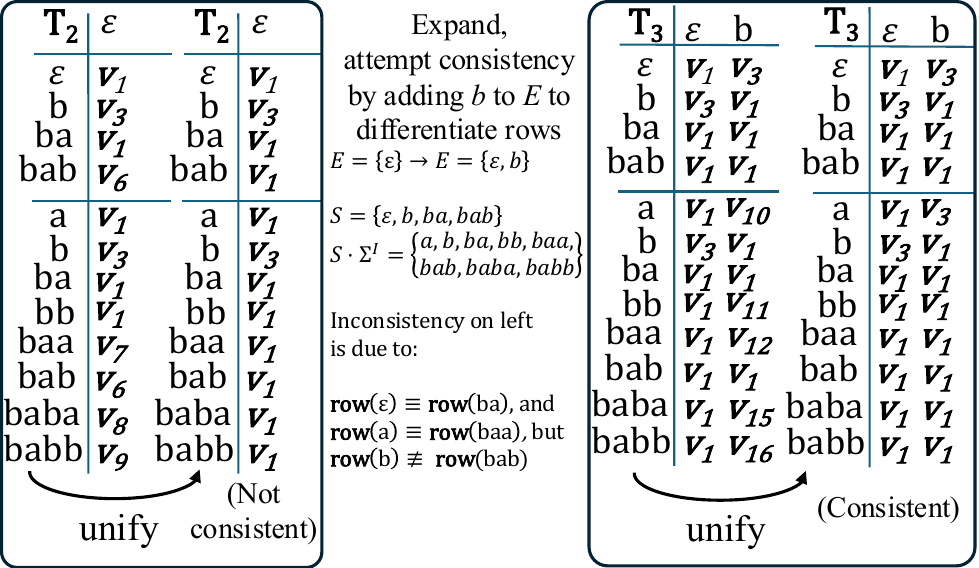}\\\includegraphics[width=0.49\textwidth]{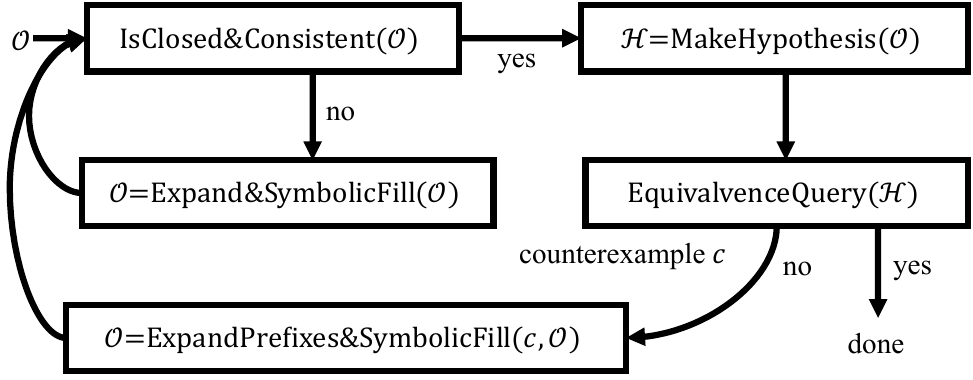} & \includegraphics[width=0.49\textwidth]{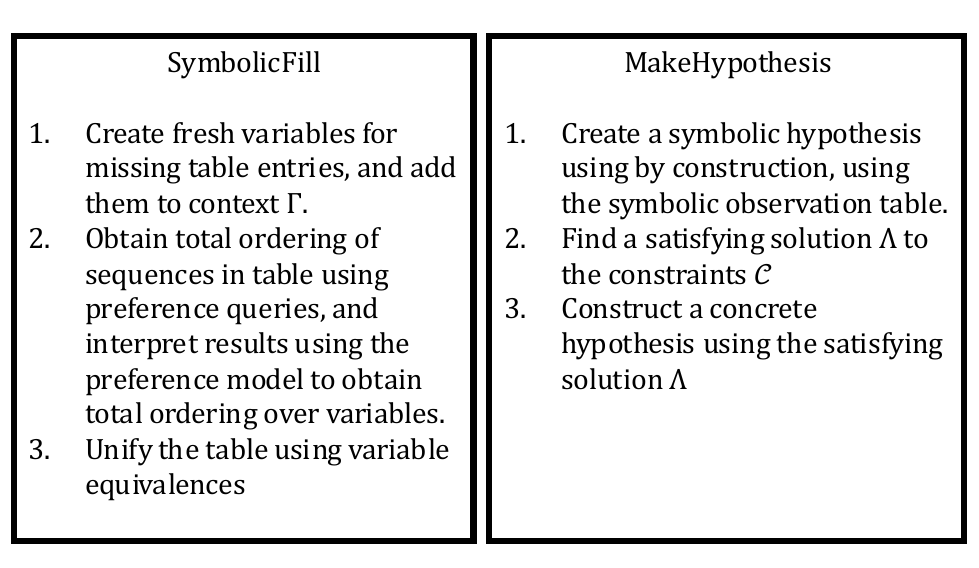}\\
    \end{tabular}
    \caption{Upper left: closure operation on symbolic observation table. Upper right: consistency operation on symbolic observation table. Bottom Left: high-level \textsc{Remap} algorithm, which is an algorithm based on \lstar{}. Bottom Right: Details for \textsc{Remap} \textsc{SymbolicFill} and \textsc{MakeHypothesis} procedures.}
    \label{fig:lstar-remap}
\end{figure}

\textbf{Formal Theory}. A \emph{formal theory} is the tuple $\langle \Xi, \alpha, \mathcal{F}\rangle$, where \emph{signature} $\Xi$ is a set of constant, function, and predicate symbols, \emph{axioms} $\alpha$ is a finite set of closed formulae utilizing only symbols from $\Xi$, and $\mathcal{F}$ is the set of all statements utilizing symbols from $\Xi$, as well as variables, logical connectives, and quantifiers, such that every model $\Lambda$ satisfying a formula $F\in \mathcal{F}$ also satisfies all axioms in $\alpha$. In other words, a formal theory encompasses a set of axioms, as well as all statements which can be logically derived from the axioms.

\textbf{The \lstar{} Algorithm}. \lstar{} \cite{Angluin87} is an active learning algorithm for learning automata, wherein the learner $\mathcal{L}$ must output an automaton $\mathcal{H}$ equivalent to the target automaton $\mathcal{A}=\langle Q,q_0,\Sigma^I,\Sigma^O,\delta,L\rangle$ the teacher $\mathcal{T}$ has in mind. Specifically, the learner must hypothesize an automaton $\mathcal{H}$ such that $\mathcal{H}(s)=\mathcal{A}(s)$ for all $s\in(\Sigma^I)^*$, where $\mathcal{A}(s)=L(\delta(q_0,s))$. While $\mathcal{L}$ and $\mathcal{T}$ share knowledge of alphabets $\Sigma^I$ and $\Sigma^O$ with $|\Sigma^O|=2$, the learner can only obtain information from $\mathcal{T}$ by asking two types of queries: \emph{membership queries} $\mathbf{memQ}(s)=\mathcal{A}(s)$ provide an observation of the concrete value used to label the state $\delta(q_0,s)$, and \emph{equivalence queries} $\mathbf{equivQ}(\mathcal{H})=\langle \forall s: \mathcal{H}(s)=\mathcal{A}(s), c: \mathcal{H}(c)\neq\mathcal{A}(c) \rangle$, where $\mathcal{T}$ checks if there exists a counterexample $c$ for which $\mathcal{H}(c)\neq\mathcal{A}(c)$ and returns $\langle \textbf{False}, c\rangle$ if $c$ exists; otherwise if $c$ does not exist, $\mathcal{T}$ returns $\langle \mathbf{True}, \cdot\rangle$. The algorithm utilizes a concrete observation table and proceeds with $\mathcal{L}$ asking membership queries and expanding the table until it becomes closed and consistent, after which it constructs a hypothesis $\mathcal{H}$ to ask in an equivalence query. If a counterexample $c$ is received, the table is expanded with $c$ and all its prefixes. This entire process repeats until no counterexample is returned from the equivalence query.

\textbf{The \textsc{Remap} Algorithm}. \textsc{Remap}~\cite{remap} is an \lstar{}-based algorithm, where preference queries $\mathbf{prefQ}(s_1,s_2)$ replace membership queries, and $\mathcal{T}$ returns one of $\mathcal{A}(s_1)>\mathcal{A}(s_2)$, $\mathcal{A}(s_1) <\mathcal{A}(s_2)$, or $\mathcal{A}(s_1)=\mathcal{A}(s_2)$, depending on its preference of $s_1\succ s_2$, $s_1\prec s_2$, or $s_1\sim s_2$~respectively, which the learner encodes as $\Gamma[s_1]>\Gamma[s_2],\Gamma[s_1]<\Gamma[s_2]$, or $\Gamma[s_1]=\Gamma[s_2]$ into $\mathcal{C}$. The equivalence query is modified to a 3-tuple which includes feedback on the value of $c$ via $\mathbf{equivQ}(\mathcal{H})=\langle \forall s: \mathcal{H}(s)=\mathcal{A}(s), c: \mathcal{H}(c)\neq\mathcal{A}(c), \mathcal{A}(c) \rangle$. Figure \ref{fig:lstar-remap} depicts the \textsc{Remap} algorithm at a high-level. It is assumed that $\Sigma^O$ has a total ordering, and $(\Sigma^I)^*$ has a total ordering, and that the preferences of $\mathcal{T}$ are consistent with both. The algorithm proceeds similarly to \lstar{}, performing preference queries and table expansions in order to obtain a unified, closed, and consistent observation table so that it can ask an equivalence query. Counterexamples are processed identically to \lstar{}. However, \textsc{Remap} operates symbolically, and relies on creating equivalence classes of variables from known variable equalities in $\mathcal{C}$. Conjecturing satisfying variable equalities from constraints turns out to be a pivotal step in \ouralgorithm{}. We now detail the problem formulation \ouralgorithm{} solves.

\begin{figure}[b!]
    \centering
    \begin{tabular}{r|l}
        \includegraphics[width=0.49\textwidth]{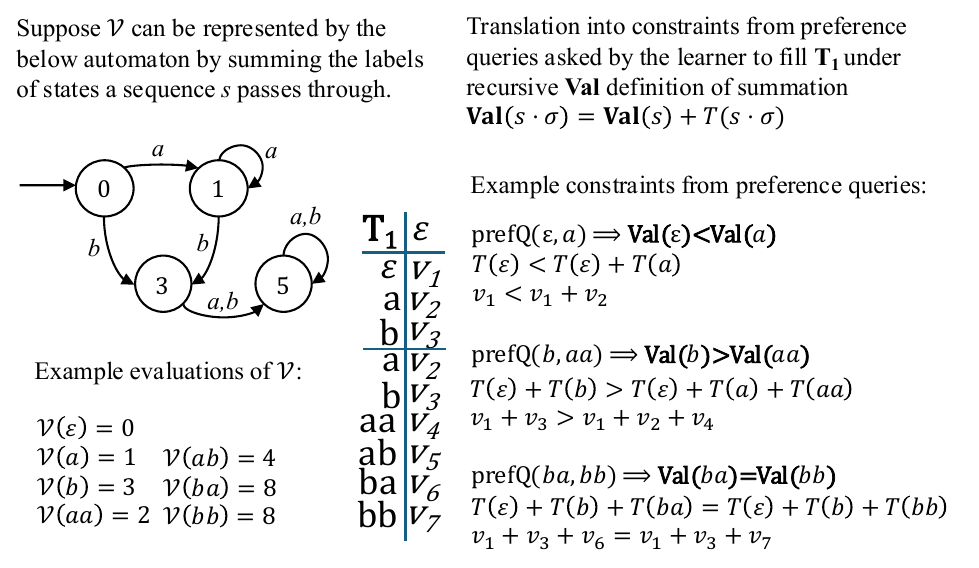} & \includegraphics[width=0.49\textwidth]{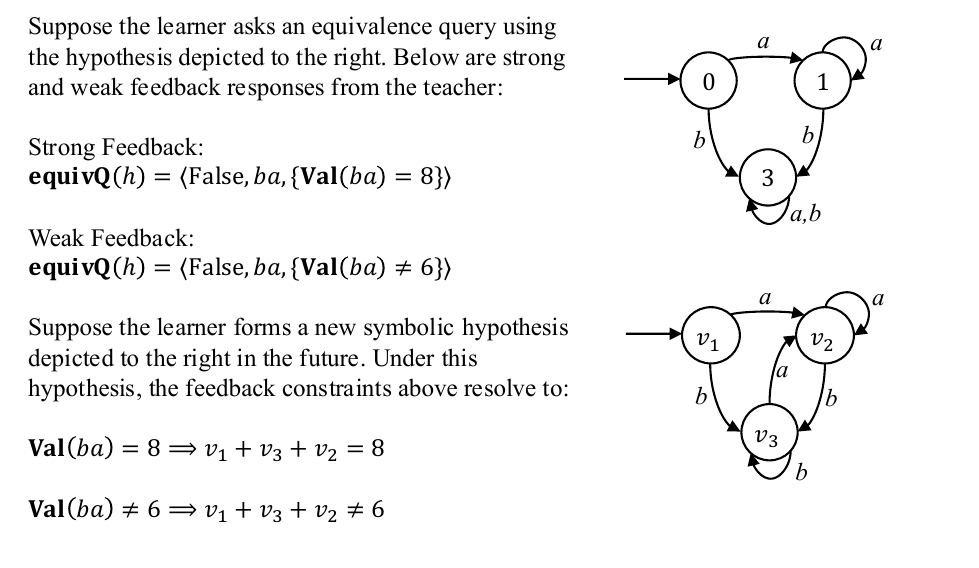}
    \end{tabular}
    \caption{Example constraints obtained under \ouralgorithm{} for preference queries (left) and equivalence queries (right) under non-discounted summation $\mathbf{Val}$ and $\mathcal{V}$ functions.}
    \label{fig:prefq-equivq-example}
\end{figure}
\section{Problem Formulation}

Consider an active learning setup where the learner $\mathcal{L}$ must learn a function $\mathcal{V}: (\Sigma^I)^*\rightarrow \mathbb{Q}$ exactly as a deterministic quantitative automaton $\mathcal{A}=\langle Q,q_0,\Sigma^I,\Sigma^O,\delta,L,\mathbf{Val}\rangle$. The learner can only obtain information about $\mathcal{V}$ by asking the teacher $\mathcal{T}$ preference and equivalence queries. The learner is aware of the algebraic form of $\mathcal{V}$, and sets $\mathbf{Val}$ to have equivalent form. The teacher is assumed to have preferences consistent with $\mathcal{V}$, and responds to the preference query as follows:
\begin{align}
    \textbf{prefQ}(s_1,s_2) = \mathcal{P}(s_1,s_2|\mathcal{V})=\left\{\begin{array}{rl}
    1 & \text{ if } \mathcal{V}(s_1) > \mathcal{V}(s_2)\\
    0 & \text{ if } \mathcal{V}(s_1) = \mathcal{V}(s_2)\\
    -1& \text{ if } \mathcal{V}(s_1) < \mathcal{V}(s_2)
\end{array}\right.
\end{align} For equivalence queries, the teacher responds with a 3-tuple as follows:
\begin{align}
    \mathbf{equivQ}(\mathcal{A})=\langle \forall s: \mathcal{A}(s)=\mathcal{V}(s), c: \mathcal{A}(c)\neq\mathcal{V}(c), f(c,\mathcal{A},\mathcal{V})\rangle
\end{align} where feedback is represented by $f(c,\mathcal{A},\mathcal{V})$. If the provided feedback is \emph{strong}, then the learner acquires the constraint $\mathbf{Val}(c)=\mathcal{V}(c)$ from $f(c,\mathcal{A},\mathcal{V})$, indicating that the concrete value of $c$ is $\mathcal{V}(c)$. If the feedback is \emph{weak}, then the learner obtains $\mathbf{Val}(c)\neq\mathcal{A}(c)$ from $f(c,\mathcal{A},\mathcal{V})$, indicating that the concrete value of $c$ is not $\mathcal{A}(c)$ for the current hypothesis $\mathcal{A}$. Figure \ref{fig:prefq-equivq-example} depicts examples constraints obtained from preference queries and equivalence queries under non-discounted summation $\mathbf{Val}$ and $\mathcal{V}$ functions.

Thus, given a teacher $\mathcal{T}$ capable of answering \textbf{prefQ} and \textbf{equivQ} asked by a learner $\mathcal{L}$, the learner must infer the minimal deterministic quantitative automaton $\mathcal{A}$ equivalent to $\mathcal{V}$. We now describe how \ouralgorithm{} solves this problem.

\begin{algorithm}[t!]
\caption{\textbf{Qu}antitative Automata \textbf{In}ference from \textbf{T}heory and \textbf{I}ncremental \textbf{C}onstraints}
\label{alg:quintic-pseudo}
\textbf{Input}: Alphabets $\Sigma^I$ (input) and $\Sigma^O$ (output), teacher $\mathcal{T}$ utilizing valuation function $\mathcal{V}$\\
\textbf{Output}: Deterministic Quantitative Automaton $\mathcal{A} = \langle \hat{Q}, \Sigma^I, \Sigma^O, \hat{q}_0, \hat{\delta}, \hat{L}, \mathbf{Val}\rangle$\\
\textbf{Assumptions}: A formal theory $\mathbf{T}$ for which there exists an $\mathcal{R}\in\{=,\neq\}$ such that $\mathcal{V}(s)\mathbf{R}\mathbf{Val}(s)$ is valid in $\mathbf{T}$. A recursive definition of $\mathbf{Val}$ exists. $\mathcal{L}$ is a stack, where each element is a pair of observation table and equivalence class set. $\mathcal{S}$ is an SMT solver.\\
\textbf{Remarks:} Due to the recursive definition of $\mathbf{Val}$, the set $(S\cup(S\cdot\Sigma^I))\cdot E$ must be prefix-closed. This means $E$ must now also be prefix-closed: during a consistency fix operation, if $e$ is added to $E$, then all prefixes of $e$ are also added to $E$.\\
\begin{algorithmic}[1]
\STATE Initialize $\mathcal{O} = \obstable$ with $S=\{\varepsilon\},E=\{\varepsilon\}$, $\mathcal{C}=\{\}$, $\Gamma=\emptyset$\\
\STATE \textit{sat?}$=$\textit{UNSAT}, \textit{iscorrect} $=$ False, $\mathcal{L}=$ \textsc{Stack}(), $\mathcal{S}=$ \textsc{Solver}()
\STATE $\mathcal{O,L,S}\longleftarrow$ \textsc{SymbolicFill}$\left(\mathcal{O,L,S,T},\Sigma^I, \Sigma^O,\text{backtrack=False}\right)$
\REPEAT
\WHILE{\textit{sat?} is \textit{UNSAT}}
\STATE $\mathcal{O,L,S} \longleftarrow$ \textsc{MakeClosedAndConsistent}$\left(\mathcal{O,L,S,T},\Sigma^I, \Sigma^O\right)$
\STATE \textit{sat?}, $\mathcal{H}\longleftarrow$\textsc{MakeSymbolicHypothesis}$\left(\mathcal{O,S}, \Sigma^I, \Sigma^O\right)$
\IF{\textit{sat?} is \textit{UNSAT}}
    \STATE $\mathcal{O,L,S}\longleftarrow$ \textsc{SymbolicFill}$\left(\mathcal{O,L,S,T},\Sigma^I, \Sigma^O,\text{backtrack=True}\right)$
\ENDIF
\ENDWHILE
\STATE $\mathcal{S}$.\textsc{Push}()
\STATE Exhaustively test all concrete hypotheses $h$ under symbolic hypothesis $\mathcal{H}$ and constraints $\mathcal{C}$ using \textsc{EquivQ}$(h)$ and record counterexamples and symbolic constraints into $\mathcal{C}$ until an $h$ is found to be correct or there exist no more valid $h$ under $\mathcal{H}$ and the updated constraints $\mathcal{C}$.
\STATE $\mathcal{S}$.\textsc{Pop}()
\IF{$h$ is correct}
\RETURN $h$
\ELSE
\STATE $\mathcal{O,L,S}\longleftarrow$ \textsc{SymbolicFill}$\left(\mathcal{O,L,S,T},\Sigma^I, \Sigma^O,\text{backtrack=True}\right)$
\STATE Expand $\mathcal{O}$ with prefixes of all counterexamples obtained from Line 13
\STATE $\mathcal{O,L,S}\longleftarrow$ \textsc{SymbolicFill}$\left(\mathcal{O,L,S,T},\Sigma^I, \Sigma^O,\text{backtrack=False}\right)$
\ENDIF
\UNTIL{\textit{iscorrect} $=$ True}
\end{algorithmic}
\end{algorithm}

\section{\ouralgorithm{}}

\ouralgorithm{}, shown in Algorithm \ref{alg:quintic-pseudo}, is an \lstar{}-based algorithm \cite{Angluin87} for actively learning quantitative automata from constraints collected through preference and equivalence queries. Given a pair $\mathbf{Val}$ and $\mathcal{V}$ such that $\mathcal{V}(s)\mathbf{R}\mathbf{Val}(s)$ is always a statement contained in a formal theory $\mathbf{T}$ for a relation $\mathbf{R}\in\{=,\neq\}$, and a set of collected constraints $\mathcal{C}$ contained in $\mathbf{T}$, \ouralgorithm{} applies deductive reasoning using $\mathbf{T}$, as well as an informed search encoded as a maximum satisfiability (MaxSMT) problem to arrive at the correct deterministic quantitative automaton. Iterative deepening depth first search (IDS) guarantees completeness and minimalism, since the learner is unaware of the minimum number of states required to represent $\mathcal{V}$.

\ouralgorithm{}, similar to \textsc{Remap} \cite{remap}, employs a symbolic observation table $\mathcal{O} =\obstable$, and executes requisite preference queries and table expansions in order to obtain a unified, closed, and consistent table, from which an automaton hypothesis can be constructed.

The constraints obtained from preference queries are determined by $\mathcal{V}$ and $\mathbf{Val}$ via $\mathcal{V}(s_1)\,\mathbf{R}\,\mathcal{V}(s_2)\implies\mathbf{Val}(s_1)\,\mathbf{R}\,\mathbf{Val}(s_2)$ for the relation $\mathbf{R}\in\{>,<,=\}$. In order to apply deductive reasoning to arrive at a valid quantitative automaton, a suitable theory must also be supplied to the learner. We consider four forms of the valuation function, based on the automaton type. Below, we illustrate the corresponding forms for $\mathcal{A}(s)$ and the recursive form of $\mathbf{Val}$, where $T$ is the empirical observation function from the observation table:
\begin{itemize}
    \item Non-Discounted Summation $\mathcal{A}(s)=\sum_{i=0}^{|s|} L(\delta(q_0,s_{:i}))$ and $\mathbf{Val}(s\cdot\sigma) = T(s\cdot\sigma) + \mathbf{Val}(s)$ with $\mathbf{Val}(\varepsilon)=T(\varepsilon)=0$ 
    \item Discounted Summation $\mathcal{A}(s)=\sum_{i=0}^{|s|} \gamma^iL(\delta(q_0,s_{:i}))$ and\\$\mathbf{Val}(s\cdot\sigma)=\gamma^{|s\cdot\sigma|}T(s\cdot\sigma)+\mathbf{Val}(s)$ with $\mathbf{Val}(\varepsilon)=T(\varepsilon)=0$ 
    \item Product $\mathcal{A}(s)=\prod_{i=0}^{|s|} L(\delta(q_0,s_{:i}))$ and\\$\mathbf{Val}(s\cdot\sigma)=T(s\cdot\sigma)\mathbf{Val}(s)$ with $\mathbf{Val}(\varepsilon)=T(\varepsilon)=1$
    \item Classification: $\mathcal{A}(s)=L(\delta(q_0,s))$ and $\mathbf{Val}(s)=T(s)$ 
\end{itemize}
Determining whether a pair of variables in the table are equivalent is a critical operation in \ouralgorithm{}, since equivalence underscores closed and consistency checks, and this must be accomplished for each \textbf{Val} supported by \ouralgorithm{}. Conjecturing variable equivalences occurs within the \textsc{SymbolicFill} call, so we organize our discussion of \ouralgorithm{} on how variable equivalences are conjectured, and how a complete search over chains of alternating \textsc{SymbolicFill} calls and table expansions occurs in order to arrive at the correct hypothesis.

\subsection{Inferring Variable Equivalences}
Given a recursive form of $\mathbf{Val}$, inference rules can be derived to determine certain variable equivalences. Importantly, variable equivalence is not decidable for all forms of $\mathbf{Val}$. By this, we mean that the learner simply making a requisite number of preference queries is not guaranteed to determine variable equivalence from preference constraints alone. Below, we present the inference rules required each \textbf{Val} definition. In particular, we require inference rules which map from a set preference queries to a conclusion \begin{equation}\label{eq:2}T(s\cdot\sigma)~\mathbf{R}~T(s'\cdot\sigma')\text{ where }\mathbf{R}\in\{<,>,=\}.\end{equation} The inference rules can be obtained by writing Equation \ref{eq:2} in terms of $\mathbf{Val}$, and considering what combinations of preference queries would satisfy Equation \ref{eq:2}. As a result, the inference rules can be concisely encoded into a decision tree with inputs as specific preference queries. For succinctness, we denote the following when encoding the decision tree:
\begin{align*}
(X,Y)&=(\mathbf{prefQ}(s\cdot\sigma,s'\cdot\sigma'), \mathbf{prefQ}(s,s'))\\
(X',Y')&=(\mathbf{prefQ}(s\cdot\sigma, s), \mathbf{prefQ}(s'\cdot\sigma', s'))\\
z(x,y)&=\mathbf{1}_{[x>y]}-\mathbf{1}_{[x<y]} \text{ for } x,y\in\{1,0,-1\}\\
Z(x,y)&=\left\{\begin{array}{rl}
    > & \text{ if } z(x,y)=1\\
    = & \text{ if } z(x,y)=0\\
    < & \text{ if } z(x,y)=-1
\end{array}\right.
\end{align*}
\begin{itemize}
    \item Non-Discounted Summation: If $|X+Y|=2$ and $|X'+Y'|=2$, then the relation $\mathbf{R}$ in $T(s\cdot\sigma)\mathbf{R}T(s'\cdot\sigma')$ cannot be determined. Otherwise, if $|X+Y|=2$ and $|X'+Y'|\neq2$, then $\mathbf{R}=Z(X',Y')$. Otherwise, if $|X+Y|\neq2$, then $\mathbf{R}=Z(X,Y)$.
    \item Product: Identical to non-discounted summation.
    \item Discounted Summation: If $|s|=|s'|$, then use non-discounted summation. Otherwise, if $|s|\neq|s'|$ and $|X'+Y'|=2$, then the relation $\mathbf{R}$ in $T(s\cdot\sigma)\mathbf{R}T(s'\cdot\sigma')$ cannot be determined. Otherwise, if $|s|\neq|s'|$ and $|X'+Y'|\neq2$, then $\mathbf{R}=Z(X',Y')$.
    \item Classification: $\mathbf{R}$ is determined directly from $\mathbf{prefQ}(s\cdot\sigma,s'\cdot\sigma')$, so no inference is needed.
    
\end{itemize}

\begin{figure}[b]
    \centering
    \includegraphics[height=0.25\textwidth]{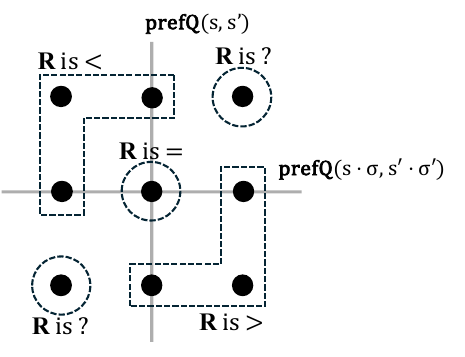}
    \includegraphics[height=0.25\textwidth]{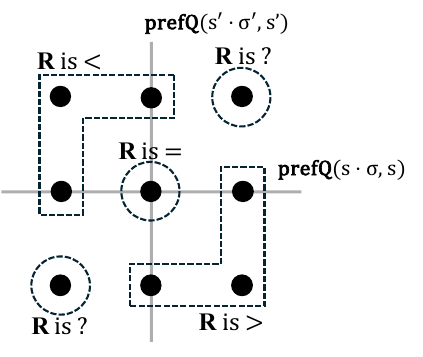}
    \caption{Variable equivalence inference rules encoded as a decision tree. Preference query pairs indicate the relation $\mathbf{R}$ in $T(s\cdot\sigma) \mathbf{R} T(s'\cdot\sigma') \iff \mathbf{Val}(s\cdot \sigma)+\mathbf{Val}(s') \mathbf{R} \mathbf{Val}(s'\cdot \sigma') + \mathbf{Val}(s)$ for non-discounted summation valuation. For $\mathbf{R}$ to remain unknown, both preference query pairs must reflect that $\mathbf{R}$ is unknown. Otherwise, the relation $\mathbf{R}$ can be determined.}
    \label{fig:decision-tree-inference-rules}
\end{figure}

Figure~\ref{fig:decision-tree-inference-rules} depicts the decision tree regions, which are points in $\{-1,0,1\}\times\{-1,0,1\}$ space. Note that the relation $\mathbf{R}$ in $T(s\cdot\sigma)\mathbf{R}T(s'\cdot\sigma')$ cannot always be determined, and therefore, unification cannot be greedily applied to replace all members of each equivalence class with their representatives. However, in order to make a hypothesis, the learner must ensure closedness and consistency, which entails determining whether pairs of variables in the table are equivalent. Since there may be pairs of variables for which equivalence is unknown from preference constraints alone, the learner must \emph{conjecture} variable equivalences and test whether each sequence of conjectures it makes leads to a correct hypothesis or not. Given a set of constraints $\mathcal{C}$, let $\mathcal{E}$ represent a set of variable equivalences, inclusive of known equivalences derivable from inference rules and conjectured equivalences. We call $\mathcal{E}$ a \emph{set of equivalence classes}. Since \ouralgorithm{} may need to repeatedly make conjectures, it is helpful to represent Algorithm~\ref{alg:quintic-pseudo} as repeatedly attempting to satisfy an SMT problem.

Thus, the problem of finding a satisfying automaton $\mathcal{H}$ can be expressed as repeatedly solving a satisfaction problem for a given symbolic observation table $\mathcal{O}$, set of preference constraints $\mathcal{C}$, and set of feedback constraints $\mathcal{B}$. At each iteration, we ask, does there exist a pair $(\mathcal{E,H})$ satisfying
\begin{equation}\label{eq:1}
    \mathcal{E} = f_1(\mathcal{O,C})\land \mathcal{H}=f_2(\mathcal{E,O})\land f_3(\mathcal{H,B})
\end{equation} 

\ouralgorithm{} operates as follows: starting with an initial $\mathcal{O,C,B}$, determine if Equation \ref{eq:1} is satisfiable. If not, change $\mathcal{O}$ by expanding it, and as a result, collect more constraints into $\mathcal{C}$ and $\mathcal{B}$ via preference queries and equivalence queries, and try again. Observe that this process of repeatedly attempting to satisfy Equation \ref{eq:1} corresponds to Algorithm \ref{alg:quintic-pseudo}, with $\mathcal{E}=f_1(\mathcal{O,C})$ representing variable equivalence conjectures inside \textsc{SymbolicFill}, term $\mathcal{H}=f_2(\mathcal{E,O})$ representing Lines 5--11, and term $f_3(\mathcal{H,B})$ representing Line~13. Note that the expansion and querying procedure follows closely to REMAP and \lstar{}, but is more generalized to accommodate valuations other than classification. We will now go over each of the terms in Equation~\ref{eq:1}.

\subsection{Conjecturing Variable Equivalences within \textsc{SymbolicFill}}

The clause $\mathcal{E}=f_1(\mathcal{O,C})$ corresponds to solving for a set of equivalence classes that satisfy $\mathcal{C}$, given a symbolic observation table~$\mathcal{O}$. As described previously, certain variable equivalences cannot be determined from preference query constraints in $\mathcal{C}$ alone. 

Given a set of obtained constraints $\mathcal{C}$ expressed over $m$ variables in table $\mathcal{O}$, and a theory $\mathbf{T}$ appropriate for inferring variable equalities from constraints in $\mathcal{C}$, the learner applies appropriate inference rules from $\mathbf{T}$ to infer a set of variable pairs $\mathcal{K}$ with \emph{known} equivalence or inequivalence relations, and a set of variable pairs $\mathcal{U}$ with \emph{unknown} relations. That is,
\begin{equation}\label{eq:known-unknown}(\mathcal{C},\mathbf{T})\implies (\mathcal{K,U}) \text{ where } |\mathcal{K}|+|\mathcal{U}|=m(m-1)\end{equation}
Once $\mathcal{K}$ and $\mathcal{U}$ have been obtained, the learner must hypothesize possible equivalences for pairs from $\mathcal{U}$ in order to create a possible set of equivalence classes $\mathcal{E}$ satisfying $\mathcal{C}$.

Since the aim is to obtain a \emph{minimal} quantitative automaton (one with as few states as possible), we seek to maximize the number of equivalences in $\mathcal{U}$. Therefore, we consider maximizing this objective while solving for a valid solution to the constraints $\mathcal{C}$:
\begin{equation}\label{eq:max-smt}\Lambda=\argmax_{(v_1,\dots,v_m)} \displaystyle\sum_{(v_i,v_j)\in\mathcal{U}}\mathbf{1}_{\left[v_i\mathbf{R}_{ij}v_j \land \mathbf{R}_{ij}\in\{=\}\right]} \text{ s.t. } \Lambda \text{ satisfies } \mathcal{C}\end{equation}

This objective dictates the order in which feasible solutions to $\mathcal{C}$ are chosen, greedily guiding \ouralgorithm{} towards a minimal automaton. Equation~\ref{eq:max-smt} is simple, and it \emph{roughly} corresponds with maximizing the number of equivalent rows in the table, while also assigning an equivalence relation between variables in different columns. Of course, the minimalism guarantee of \ouralgorithm{} is independent from the order in which feasible solutions are selected; iterative deepening search in Section~\ref{section:chains} is what provides the guarantee.

Nonetheless, to select an $\mathcal{E}$, we note that a given solution $\Lambda$ to constraints $\mathcal{C}$ corresponds to a particular equivalence class set $\mathcal{E}$. In particular, we observe that we are assigning $m$ variables to at most $|\Sigma^O|$ buckets; all variables located in a given bucket are part of the same equivalence class. Once an equivalence class set $\mathcal{E}$ has been chosen, the symbolic observation table $\mathcal{O}$ can be unified in preparation for being made closed and consistent.

\subsection{Constrained Hypothesis Search over Conjecture--Expansion Chains}\label{section:chains}
The clause $\mathcal{H}=f_2(\mathcal{E,O})$ summarizes the process of searching for a satisfying hypothesis $\mathcal{H}$, by searching for a sequence of $(\mathcal{E,O})$ equivalence class set and observation table pairs. First, we describe how the sequence of $(\mathcal{E,O})$ pairs is generated by an alternating process. Second, we explain how the search over sequences, and therefore hypotheses, is guaranteed to be complete using iterative deepening search, in order to avoid infinite length sequences. Finally, we discuss equivalence class consistency within a given sequence, and show how the consistency requirement leads to a necessary inference rule relating equivalence class sets to number of variables, and therefore describes when the table must be expanded, resulting in guaranteed minimalism when combined with iterative deepening.

First, note that in order to produce a hypothesis, \ouralgorithm{} requires a unified, closed, and consistent observation table $\mathcal{O}$. A valid equivalence class set $\mathcal{E}$ can be used to unify the table, and once unified, $\mathcal{O}$ must be made closed and consistent, potentially through table expansions. The effect of a table expansion is the addition of more variables and the collection of more constraints.

This necessarily creates an alternating process between selecting a set of equivalence classes and seeking closedness and consistency. The $f_2$ process generates a $(d+1)$-length chain $\langle(\mathcal{E}_{k+1},\mathcal{O}_k)\rangle_{0\leq k\leq d}$ of $(\mathcal{E,O})$ pairs resulting in a hypothesis $\mathcal{H}$. Thus, $f_2$ can be viewed as searching for a satisfying chain $\langle(\mathcal{E}_{k+1},\mathcal{O}_k)\rangle_{0\leq k\leq d}$ resulting in valid $\mathcal{H}$, and is summarized as alternating between the \textsc{SymbolicFill} and \textsc{Expansion} calls below:
\begin{align*}
\textsc{SymbolicFill}&\left\{\begin{array}{rcl}
\mathcal{C}_{k-1}&=&\textsc{ObtainConstraints}(\mathcal{O}_{k-1})\\
\mathcal{E}_k&=&f_1(\mathcal{O}_{k-1}, \mathcal{C}_{k-1})\\
\tilde{\mathcal{O}}_k&=&\textsc{Unification}(\mathcal{E}_k,\mathcal{O}_{k-1})\\
\end{array}\right.\\
\textsc{Expansion}&\left\{\mathcal{O}_k=\textsc{ExpandIfNotClosedAndConsistent}(\tilde{\mathcal{O}}_k)\right.
\end{align*}
There may be multiple choices of $\mathcal{E}$ which satisfy a particular $\mathcal{C}$, implying that there may exist multiple $\langle(\mathcal{E}_{k+1},\mathcal{O}_k)\rangle_{0\leq k\leq d}$ chains of varying lengths which result in different hypotheses. In order to guarantee search completeness, and to avoid potentially infinite sequence lengths, we limit the size of $d$ by using iterative deepening depth first search (IDS). This is implemented through budgeting the number of \textsc{SymbolicFill} calls allowed. Iterative deepening implies a backtracking operation must be supported to shorten the current chain by removing the latest $(\mathcal{E,O})$ pair. Backtracking occurs whenever \textsc{SymbolicFill} is called with \texttt{backtrack=True}, and backtracking also occurs whenever the conjunction of $\mathcal{E}$ in the chain can no longer satisfy $\mathcal{C}$ in the presence of counterexample feedback.
\begin{theorem}[Completeness]
    \ouralgorithm{} performs a complete search of automata space.
\end{theorem}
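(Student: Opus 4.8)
The plan is to cast \ouralgorithm{}'s behavior as a depth-first traversal of an explicit search tree and then invoke the classical completeness property of iterative deepening: a finitely-branching tree whose goal nodes lie at finite depth is exhaustively searched by IDS. Concretely, I would let the nodes of the search tree be the conjecture--expansion chains $\langle(\mathcal{E}_{k+1},\mathcal{O}_k)\rangle_{0\leq k\leq d}$ produced by the alternating \textsc{SymbolicFill}/\textsc{Expansion} process, with the root being the initial table $\mathcal{O}_0$ together with the empty set of conjectures. A node at depth $d$ has as its children exactly those chains of length $d+1$ obtained by one further \textsc{SymbolicFill} step, i.e. by a single admissible choice of equivalence-class set $\mathcal{E}_{d+1}$ followed by the deterministic \textsc{ExpandIfNotClosedAndConsistent} call. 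Each hypothesis $\mathcal{H}$ the algorithm can construct corresponds to a leaf of this tree at which the table is unified, closed, and consistent.

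The first step would be to bound the branching factor. At a node whose table carries $m$ variables, the choices available to \textsc{SymbolicFill} are the equivalence-class sets $\mathcal{E}$ consistent with the known relations $\mathcal{K}$ that place the $m$ variables into at most $|\Sigma^O|$ buckets, as described by Equation~\ref{eq:max-smt} and the bucket argument preceding Section~\ref{section:chains}. The number of such set partitions is finite---at most the Bell number $B_m$---and the subsequent expansion is determined by $\mathcal{E}$, so every node has finitely many children. This immediately gives that the depth-limited search run at any fixed bound $d$ explores only a finite subtree and therefore halts; the budget on \textsc{SymbolicFill} calls is precisely what enforces this bound.

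The second step, the crux, would be to show that every automaton in the space is realized at some finite depth, so that no hypothesis is missed. For a target automaton $\mathcal{A}$ with finitely many states that is consistent with the constraints $\mathcal{C}$ and feedback $\mathcal{B}$, I would exhibit a finite chain yielding a unified, closed, consistent table whose construction via Equation~\ref{eq:states} and its accompanying definitions produces $\mathcal{A}$. Finiteness of the chain follows because $\mathcal{A}$ has a bounded number of states, so only finitely many closedness and consistency violations can ever be outstanding, each repaired by an expansion that adds finitely many rows and (prefix-closed) columns; after finitely many \textsc{SymbolicFill}/\textsc{Expansion} alternations the table stabilizes. I would also verify that the backtracking triggered when the conjunction of the $\mathcal{E}_k$ can no longer satisfy $\mathcal{C}$ in light of $\mathcal{B}$ only prunes chains that provably admit no satisfying hypothesis, so it never discards a chain leading to a valid automaton.

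Finally, I would close with the standard IDS argument: since \ouralgorithm{} increases the depth bound $d=0,1,2,\dots$ and at each iteration exhaustively searches the finite subtree up to depth $d$, and since branching is finite by the first step while any given automaton lies at a finite depth $d^\star$ by the second, that automaton is examined no later than iteration $d^\star$. Hence every element of the automata space is eventually enumerated and the search is complete. The main obstacle I anticipate is the second step: rigorously bounding the chain length needed to realize a finite automaton requires arguing that the interleaving of conjecture and expansion cannot diverge---that a correct sequence of equivalence-class conjectures, together with the forced expansions, reaches a stable unified, closed, and consistent table in finitely many steps---and that the conjecture mechanism never permanently excludes the correct partition of variables.
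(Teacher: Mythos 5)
Your proposal takes essentially the same route as the paper: both cast the budgeted \textsc{SymbolicFill}/backtracking mechanism as iterative deepening depth-first search over the tree of conjecture--expansion chains and conclude completeness from the classical completeness and optimality of IDS. In fact your version is more careful than the paper's own proof, which merely asserts that the budget simulates IDS without verifying the two preconditions you rightly isolate---finite branching (via the bucket bound on equivalence-class sets) and the existence of a correct chain at finite depth, the latter of which you flag as the main obstacle and which the paper leaves entirely implicit.
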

\begin{proof} The result follows from iterative deepening depth-first search being complete and optimal. Specifically, \ouralgorithm{} simulates iterative deepening by budgeting \textsc{SymbolicFill} calls. Let $M$ be the maximum budget, and $b$ represent the remaining budget, initialized as $M$. Each time an equivalence class set is successfully solved for, $b$ is reduced by $1$. Each time backtracking is called, $b$ is increased by $1$. Therefore, a consistent chain of equivalence class sets will be at most $M$ units long, given a starting observation table. If none of the chains results in the correct hypothesis, then all chains are removed, and $M$ is increased by $1$, and the process repeats until the correct hypothesis is found, which will be via the shortest possible chain leading to it.\qed
\end{proof}

Note that a hypothesis $\mathcal{H}$ can be made only if $\mathcal{O}_k=\tilde{\mathcal{O}}_k$, which occurs only if $\tilde{\mathcal{O}}_k$ is already closed and consistent. Once this condition is satisfied, a hypothesis can be constructed and validated using an equivalence query. Observe that if there exists $\langle(\mathcal{E}_{k+1},\mathcal{O}_k)\rangle_{0\leq k\leq d}$ of length $(d+1)$ resulting in a hypothesis, then there is a restriction on consecutively selected equivalence class sets, namely, for each $1 \leq k\leq d$, $\mathcal{E}_k$ must be consistent with $\mathcal{E}_{k-1}$. Since the table has potential to expand in size, each subsequent $\mathcal{E}$ potentially describes an increasing number of variables. We use this insight to derive a logical inference rule that enables \ouralgorithm{} to make progress towards the solution.

Observe that $\mathcal{E}_k=f_1(\mathcal{O}_{k-1},\mathcal{C}_{k-1})$ satisfies constraints $\mathcal{C}_{k-1}$, currently over $|\Gamma|=m$ variables. If $\mathcal{E}_k$ is the true solution, then it must also satisfy all future $\mathcal{C}_j$ for $j\geq k$. Note that $\mathcal{C}_j$ is expressed over more than $m$ variables. We find that if $\mathcal{E}_k$ violates a future $\mathcal{C}_j$, then it is incorrect to simply logically negate $\mathcal{E}_k$. Instead, a negation of an implication is appropriate:
\begin{equation}|\Gamma|=m\implies \neg\,\mathcal{E} \iff \mathcal{E}\implies |\Gamma|\neq m\end{equation}
This can be strengthened to
\begin{equation}\label{eqn:var-dim}|\Gamma|\leq m \implies \neg\,\mathcal{E}\iff\mathcal{E}\implies|\Gamma|>m\end{equation}
In other words, if $\mathcal{E}$ is found to be invalid involving $m$ variables under constraints in $\mathbb{Q}^m$ space, it may still be valid under $\mathbb{Q}^{m+p}$ for some $p\geq 1$, involving $m+p$ variables.

The above inference rule allows us to detect when the table must be expanded to include more variables: if all equivalence classes have been ruled out under $m$ variables, then the table must be expanded to include more than $m$ variables. This expansion allows us to guarantee minimalism when used with IDS.\begin{theorem}[Minimalism] \ouralgorithm{} results in a minimal automaton.
\end{theorem}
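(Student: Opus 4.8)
The plan is to show that among all deterministic quantitative automata equivalent to $\mathcal{V}$, the automaton returned by \ouralgorithm{} has the fewest states. Let $n^*$ denote the number of states of a minimal such automaton; this quantity is well-defined, since the number of states of any automaton consistent with $\mathcal{V}$ under the chosen $\mathbf{Val}$ is bounded below by a Myhill--Nerode-style index of $\mathcal{V}$. By the Completeness theorem, \ouralgorithm{} terminates and returns a hypothesis $\mathcal{H}$ that passes the equivalence query, so $\mathcal{H}$ is correct and hence has at least $n^*$ states; moreover, by the construction in Equation~\ref{eq:states}, the number of states of $\mathcal{H}$ equals the number of distinct row-equivalence classes $|\orows{S}|$ in its unified, closed, and consistent table. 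It therefore suffices to prove that the correct hypothesis first encountered during the search realizes exactly $n^*$ row-equivalence classes.

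First I would establish that the search explores candidate automata in nondecreasing order of state count, resting on three observations. (i) Within a fixed table $\mathcal{O}_{k-1}$, the MaxSMT objective of Equation~\ref{eq:max-smt} orders the conjectured sets $\mathcal{E}_k=f_1(\mathcal{O}_{k-1},\mathcal{C}_{k-1})$ so as to maximize the number of merged variables, thereby testing the fewest-state unifications of that table first. (ii) The strengthened inference rule of Equation~\ref{eqn:var-dim}, namely $|\Gamma|\le m\implies\neg\,\mathcal{E}$, guarantees that a conjecture ruled out over $m$ variables is reconsidered only once the table has been expanded past $m$ variables; consequently every equivalence-class configuration realizable with at most $m$ variables is exhausted before any configuration requiring more than $m$ variables is attempted. (iii) By the budgeting argument of the Completeness theorem, iterative deepening increases the bound $M$ on \textsc{SymbolicFill} calls by one per round and explores all conjecture--expansion chains of length at most $M$; since each \textsc{Expansion} step is the only mechanism that introduces new variables, and hence the only mechanism that can increase $|\orows{S}|$, a chain of length $\ell$ cannot realize more row-equivalence classes than any strictly longer chain. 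Together these yield a monotone correspondence between search depth and attainable state count.

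With this ordering in hand I would conclude as follows. By the Completeness theorem, \ouralgorithm{} finds the correct hypothesis via the shortest chain leading to any correct hypothesis; combined with the monotone correspondence above, the returned automaton attains the fewest row-equivalence classes achievable by a correct hypothesis. Since a correct $n^*$-state automaton exists and every correct automaton has at least $n^*$ states, this minimum equals $n^*$, so $\mathcal{H}$ is minimal. Equivalently, by contradiction: if $\mathcal{H}$ had $n>n^*$ states, then the canonical unified, closed, and consistent table of a minimal $n^*$-state automaton would be reachable by a no-longer chain, hence encountered no later than $\mathcal{H}$, contradicting that \ouralgorithm{} halts at the first correct hypothesis.

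The main obstacle is making the monotone correspondence of the second paragraph fully rigorous: one must show both that a minimal automaton admits a realizing chain whose length is genuinely minimal among all chains yielding a correct hypothesis, and that no sequence of conjectures can \emph{overshoot} by distinguishing spurious states that a shorter chain could have merged. The delicate point is that expansions are driven by closedness, consistency, and counterexample feedback, so I would need to verify that the expansions forced along the path to a minimal automaton never introduce more than $n^*$ distinguishable rows before a correct hypothesis is proposed; this is precisely where the interaction between Equation~\ref{eqn:var-dim}, which forbids premature state splitting, and the exhaustive per-budget exploration guaranteed by the Completeness theorem must be combined most carefully.
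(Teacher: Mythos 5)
Your proposal takes a different route from the paper, and the route has a gap that you yourself identify: the ``monotone correspondence between search depth and attainable state count'' is never established, and it is the load-bearing step. The difficulty is real. Chain length controls how many variables and prefixes exist (hence an upper bound on the number of rows), but the state count of a hypothesis is the number of \emph{distinct} rows after unification, which is determined by how many rows the final conjecture $\mathcal{E}$ merges, not by the chain's length; a short chain paired with a split-heavy $\mathcal{E}$ can realize more states than a longer chain paired with an aggressive merge, so your claim (iii) does not hold in the form you need it. Your step (i) also leans on the greedy MaxSMT ordering of Equation~\ref{eq:max-smt}, but the paper explicitly disclaims this as the source of the guarantee (``the minimalism guarantee of \ouralgorithm{} is independent from the order in which feasible solutions are selected''), so an argument routed through the solver's preference for merges is working against the paper's own design.

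The idea you are missing is the one the paper's proof rests on: \emph{every} correct hypothesis produced by the table construction is already minimal, so no ordering argument over candidate hypotheses is required. Concretely, the concrete labels of a hypothesis come from a model $\Lambda$ satisfying $\mathcal{C}$, and distinct equivalence classes of $\mathcal{E}$ receive distinct values under $\Lambda$; if the resulting automaton agrees with $\mathcal{V}$ everywhere, then the values of the table variables are pinned to their true values, so $\mathcal{E}$ merges exactly the rows that are truly equal. The construction of Equation~\ref{eq:states} collapses duplicate rows into a single state, and any two distinct rows are separated by a suffix witnessing inequivalent future behavior, so a correct hypothesis is reachable and reduced, hence minimal --- this is what the paper means by the set $\Omega$ of tables ``isomorphic by construction to the target.'' In the paper, iterative deepening (your appeal to the Completeness theorem) only guarantees that some element of $\Omega$ is reached along a shortest chain; minimality itself comes from the row-collapsing construction, not from the order in which the search visits hypotheses. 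Your fallback contradiction argument has the same hole: it presumes a correct but non-minimal hypothesis could be encountered first, which is exactly the possibility the isomorphism claim rules out.
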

\begin{proof} There exists a set $\Omega$ of infinitely many tables which are isomorphic by construction to the target. \ouralgorithm{} finds the shortest chain leading to an element in $\Omega$. Since all tables in $\Omega$ are isomorphic due to the construction mechanism Equation~\ref{eq:states} (duplicate rows are collapsed into a single state), reaching any element of $\Omega$ results in minimalism. \ouralgorithm{} happens to reach the closest element to $\mathcal{O}_0$ in $\Omega$ along the shortest chain via IDS.\qed
\end{proof}

Since $\mathcal{E}$ sequences always satisfy $\mathcal{C}$ per Equation~\ref{eq:max-smt}, equivalence classes only become invalidated by counterexample feedback. Such feedback only comes into play when a hypothesis $\mathcal{H}$ induced by a set of equivalence classes $\mathcal{E}$ is found to be invalid. Hypothesis validity is determined through an equivalence query.

\subsection{Verifying Equivalence}
The last clause $f_3(\mathcal{H,B})$ represents the process of preparing an equivalence query. Recall that $\mathcal{B}$ is the set of constraints obtained from equivalence queries. The constraints in $\mathcal{B}$ are necessarily expressed in a high-level specification in the form of $\mathbf{Val}(c)\mathbf{R}\,v$ where $c$ is a counterexample sequence, $\mathbf{R}\in\{=,\neq\}$, and $v$ is a concrete value. Resolving $\mathbf{Val}(c)\mathbf{R}\,v$ into a statement in theory $\mathbf{T}$ requires a symbolic hypothesis $\mathcal{H}$. This is because different symbolic hypotheses have different transition functions. Therefore, prior to performing an equivalence query using a concrete hypothesis, \ouralgorithm{} checks to make sure the symbolic hypothesis $\mathcal{H}$ satisfies all constraints in $\mathcal{B}$. If so, then an equivalence query can be made with a concrete hypothesis. If the hypothesis is correct, then \ouralgorithm{} terminates. Otherwise, $\mathcal{B}$ is expanded with the feedback received from the teacher. Next, we discuss empirical results for \ouralgorithm{}.

\section{Empirical Results}
In our investigation of \ouralgorithm{}, we are interested in the following questions: (1) How does \ouralgorithm{} scale? (2) How does learning different valuation functions affect runtime? (3) Does using strong feedback improve time to convergence? (4) Does explicitly expanding the table using counterexamples reduce time until termination? (5) What is the impact of encoding additional closed and consistency constraints as MaxSMT objectives? We describe the experimental setup used for answering these questions.

\begin{table*}[t!]
  \centering
  \caption{Valuation functions and required theories}
  \begin{tabular}{|c|c|c|}
    \hline
    \textbf{Valuation, Symbol} & \textbf{Output Alphabet} & \textbf{Theory Required} \\ \hline
    Non-Discounted Summation, $\mathbf{\Sigma}$ & Rationals & $\mathbf{T}_\mathbb{Q}$ (Theory of Rationals) \\ \hline
    Discounted Summation, $\mathbf{\gamma\Sigma}$& Rationals & $\mathbf{T}_\mathbb{Q}$ (Theory of Rationals) \\ \hline
    Product, $\mathbf{\Pi}$& Positive Rationals & $\mathbf{T}_\mathbb{Q}$ (Theory of Rationals) \\ \hline
    Classification, $\mathbf{\mathbb{N}}$& Finite Classes & Theory of Total Ordering \\ \hline
  \end{tabular}
  \label{tab:val-theory}
\end{table*}
\subsection{Experimental Setup}

We apply \ouralgorithm{} to infer a set of 25 ground truth quantitative automata of various input alphabet cardinalities and quantities of states, as a function of learning using theories associated the valuation functions listed in Table \ref{tab:val-theory}. The valuation functions include non-discounted summation, discounted summation, product, and classification valuations. We encode constraints using an exact rational representation, and use the Z3 SMT solver~\cite{demoura-z3-2008} to solve the MaxSMT objective.

For a baseline, we utilize \textsc{Remap}, which uses the classification valuation, but also includes greedy unification and strong feedback from the teacher, and therefore requires no backtracking. \ouralgorithm{} with counterexample expansion and strong feedback under classification valuation is most closely related to the baseline.

We compare \ouralgorithm{} and \textsc{Remap} using several variants of \ouralgorithm{}: (1) whether table expansion from counterexamples is enabled, (2) whether IDS is enabled, 
and (3) whether an additional a closedness and consistency objective (CC), described in the Appendix, is included in addition to the variable equivalence objective in Equation~\ref{eq:max-smt} (VE).

We run \ouralgorithm{} and \textsc{Remap} to learn a set of 25 ground truth automata. The teacher uses the appropriate valuation function when responding to preference queries and providing counterexample feedback. Equivalence queries are implemented by checking for automata equivalence via transition function and labels. We use the following symbols to denote valuation functions: $\mathbf{\Sigma}$ (summation), $\mathbf{\gamma\Sigma}$ (discounted summation), $\mathbf{\Pi}$ (product), and $\mathbf{\mathbb{N}}$ (classification). For each valuation function, we considered both strong (S) and weak (W) feedback, and we also considered the inclusion of an additional heuristic MaxSMT objective denoted by (CC) that ranks solutions by closedness, consistency, and number of states, with strongest preference for closedness and consistency. We denote these \ouralgorithm{} variations by S-VE (S feedback, CC disabled), W-VE (W feedback, CC disabled), S-CC-VE (S feedback, CC enabled), and W-CC-VE (W feedback, CC disabled). Each variation was run 100 times per ground truth.

We present results in Figure \ref{fig:scaling-results} for a subset of the 25 automata; the full set is available in the Appendix.

\begin{figure}[hp!]
    \centering
    \tiny{Number of Preference Queries}\\
    \includegraphics[width=\textwidth]{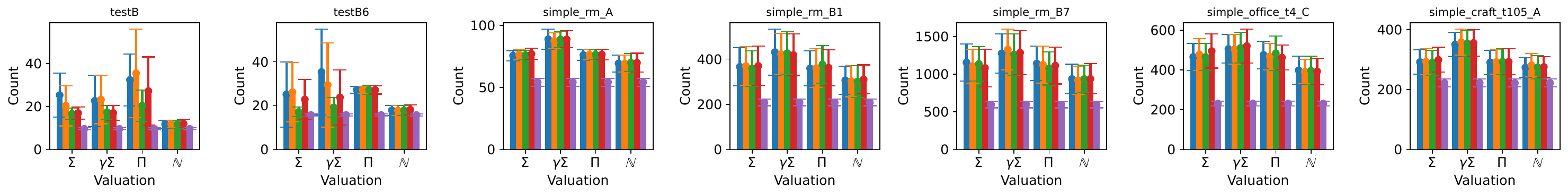}\\
    \tiny{Number of Equivalence Queries}\\
    \includegraphics[width=\textwidth]{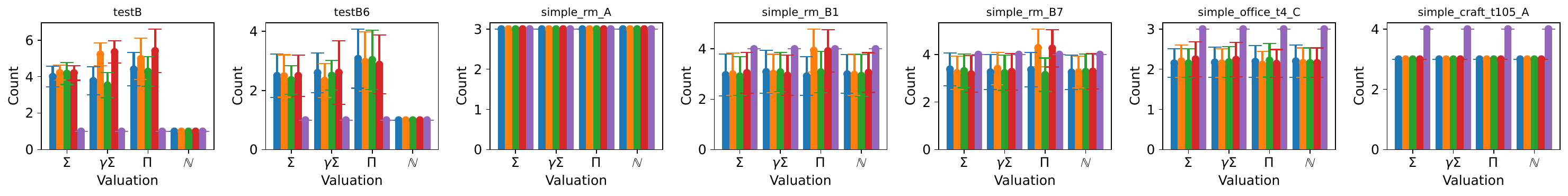}\\
    \tiny{Number of Inequalities Obtained}\\
    \includegraphics[width=\textwidth]{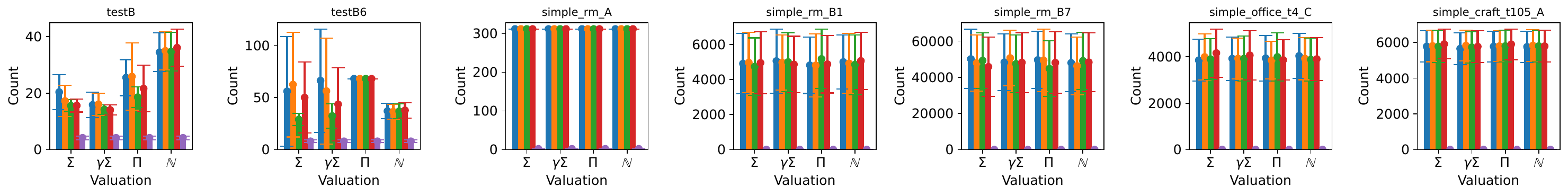}\\
    \tiny{Total Number of Variables}\\
    \includegraphics[width=\textwidth]{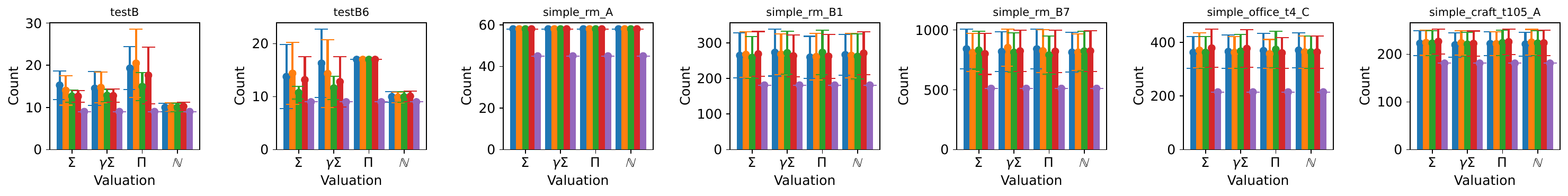}\\
    \tiny{Number of Unknown Variable Equivalences}\\
    \includegraphics[width=\textwidth]{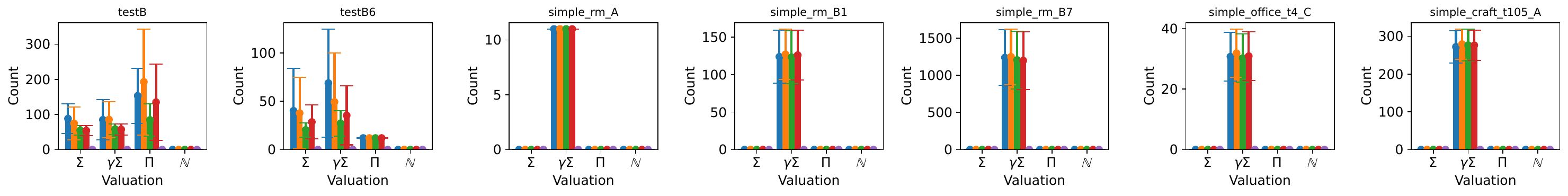}\\
    \tiny{Number of MaxSMT Objective Solves (overestimates Number of Conjectures Made)}\\
    \includegraphics[width=\textwidth]{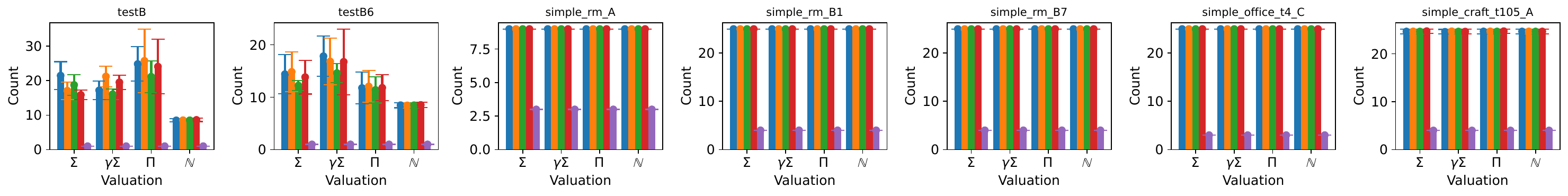}\\
    \tiny{Total Time Spent Solving MaxSMT Objectives in Milliseconds}\\
    \includegraphics[width=\textwidth]{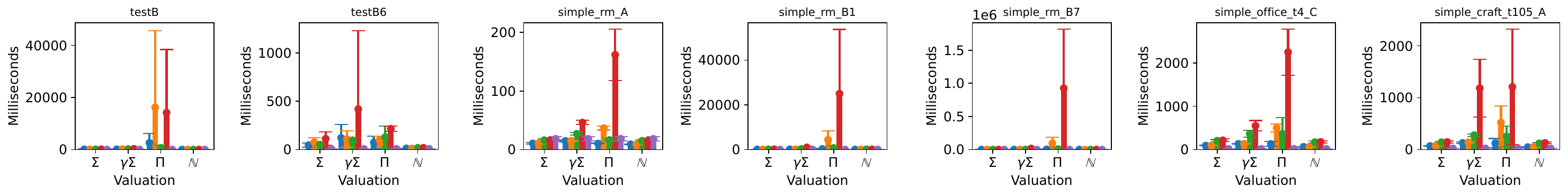}\\
    \tiny{Number of \textsc{SymbolicFill} Backtracking Calls}\\
    \includegraphics[width=\textwidth]{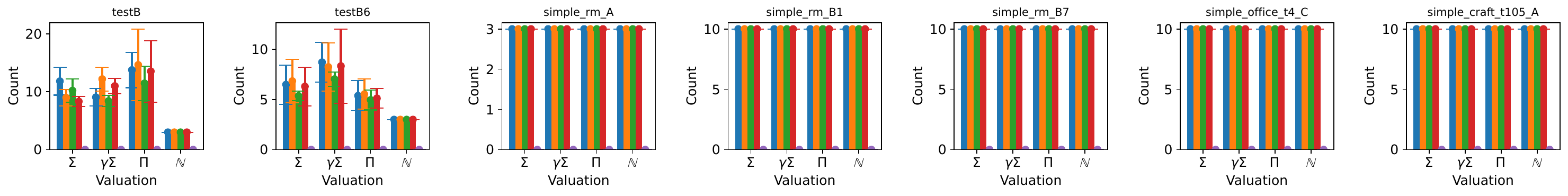}\\
    \includegraphics[width=0.5\textwidth]{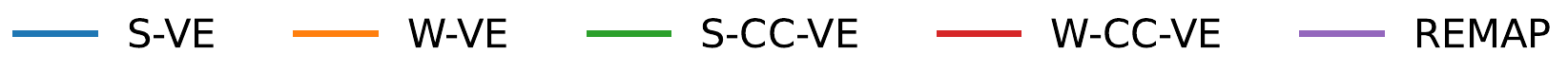}
    \caption{Results for 7 targets (columns) out of 25. Full results in Appendix. Each row is corresponds to a metric. Legend: S (strong feedback), W (weak feedback), VE (single MaxSMT objective Equation~\ref{eq:max-smt}), CC-VE (joint MaxSMT objectives closedness and consistency objective, and Equation~\ref{eq:max-smt}), \textsc{Remap} is the baseline. Each bar is the empirical mean of 100 trials, and error bars represent 1 standard deviation.}
    \label{fig:scaling-results}
\end{figure}
\subsection{Scaling}
We measure scaling by considering the difficulty of problems and the time taken to solve them. We measure the size of the MaxSMT objective (equivalent to the number of unknown variable equivalences), the number of variables created, number of preference and equivalence queries, number of inequalities gathered, and total time spent solving MaxSMT objectives. \ouralgorithm{} successfully learns all 25 targets. The results are shown in Figure \ref{fig:scaling-results}.

\textbf{Queries}. Across all environments, \ouralgorithm{} must make at least as many preference queries on average as \textsc{Remap}. The symbolic observation tables \ouralgorithm{} works with are generally larger than those of \textsc{Remap}, as measured by number of variables, due to the prefix-closed requirement on suffix set $E$ in \ouralgorithm{} which is needed to support recursive valuation function definitions. \textsc{Remap} does not require $E$ to be prefix-closed. The number of equivalence queries \ouralgorithm{} makes compared to \textsc{Remap} appears to be dependent on the target being learned, in terms of the minimum length of the conjecture--expansion chain required to reach the solution under IDS.

\textbf{Constraints and Solver Time}. The results shown in Figure \ref{fig:scaling-results} indicate that \ouralgorithm{} tends to gather significantly more inequalities than \textsc{Remap}, due to table expansions from incorrect variable equivalence conjectures. Searching an incorrect chain causes table expansions and induces preference queries, which gather inequalities. We also observe that \ouralgorithm{} can successfully navigate large numbers of unknown variable relations, as measured by MaxSMT objectives involving at least 1000 terms when learning certain targets under the $\mathbf{\gamma\Sigma}$ valuation. While the total number of conjectures made typically does not surpass $50$ when learning these targets, the total solver time can be quite expensive for certain targets, especially under weak feedback and when including the CC objective. In fact, the solving time appears to depend on the valuation function used, and therefore also depends on the corresponding solvers employed by Z3.

\subsection{Impact Of Valuation Function}
\textbf{Solver Time}. Our results show that \ouralgorithm{} has a more difficult time learning with $\mathbf{\gamma\Sigma}$ and $\mathbf{\Pi}$ valuation functions. We observe that for learning non-trivial targets, \ouralgorithm{} requires more solver time than \textsc{Remap} under all valuation functions $\mathbf{\Sigma}, \mathbf{\gamma\Sigma}, \mathbf{\Pi}, \text{ and }\mathbf{\mathbb{N}}$. Solving inequalities with $\mathbf{\Sigma}, \mathbf{\gamma\Sigma}, \text{ and }\mathbf{\mathbb{N}}$ remains under rational linear arithmetic, whereas solving under $\mathbf{\Pi}$ valuation results in polynomial inequalities that require a more expensive non-linear arithmetic solver. While a linearization using logarithms is possible, Z3 does not support transcendental functions.

\textbf{Conjectures and Objective Size}. We observe under $\mathbf{\mathbb{N}}$ valuation, \ouralgorithm{} does not encounter any unknown variable relations (measured by MaxSMT objective size), and that the solution is found with fewer MaxSMT solves than $\mathbf{\Sigma,\gamma\Sigma, \text{ or } \mathbf{\Pi}}$. The $\mathbf{\gamma\Sigma}$ and $\mathbf{\Pi}$ valuations generally result in the largest number of unknown variable relations.

\subsection{Impact Of Strong Feedback}
We hypothesize that strong feedback may be more informative than weak feedback. However, the results indicate use of strong feedback compared to weak feedback does not necessarily reduce the number of conjectures \ouralgorithm{} makes. An explanation is that the number of solutions to a strong feedback constraint is not necessarily smaller than the number of solutions to a weak feedback constraint. However, using strong feedback tends to use less solver time than weak feedback, which seems to indicate the constraint encoding is important for the solver. In contrast to feedback strength, including the closedness and consistency (CC) objective has greater impact on solver time and number of solves required.

\subsection{Effect of Closedness and Consistency Objective}
We observe that including the closedness and consistency MaxSMT objective incurs a trade-off. Solving two MaxSMT objectives takes more time than solving one objective, yet, including the CC objective slightly decreases number of (a) variables under consideration, (b) conjectures made, (c) unknown variable relations, and (d) inequalities gathered.

\subsection{Importance of Iterative Deepening and Counterexample Table Expansion}
We also consider ablations to \ouralgorithm{}. Our first ablation considers whether \ouralgorithm{} maintains a complete search without iterative deepening to limit conjecture--expansion chain length. When IDS is disabled, we observe learning under $\mathbf{\Pi}$ fails to terminate for certain targets. \ouralgorithm{} would search along an infinite length chain, under $\mathbf{\Pi}$, implying constant table expansions, even if the CC objective was enabled. However, under $\mathbf{\Sigma}, \mathbf{\gamma\Sigma}, \text{ and } \mathbf{\mathbb{N}}$ valuations, \ouralgorithm{} still successfully learns all 25 targets.

When disabling counterexample table expansion with IDS still enabled, we observe certain targets (e.g. such as \texttt{simple\_craft\_t105\_A}) become infeasible to learn due to memory usage. This is attributed to the fact that \ouralgorithm{} can detect when a table must be forcefully expanded, per Equation \ref{eqn:var-dim}, and expand the table. As shown in Appendix Figure~\ref{fig:ablation}, which measures the number of $(\mathcal{E,O})$ pairs considered, such expansion is not guided by counterexamples, and is therefore likely to expand the table in an unguided, breadth-first-search manner (adding the \emph{shortest} missing sequence) that is not optimal. This is evidenced by increased numbers of MaxSMT solves, and increased variance. However, while \ouralgorithm{} becomes much less efficient under these settings, it is still complete.

\section{Conclusion}
We present \ouralgorithm{}, an \lstar{}-based algorithm for actively learning deterministic quantitative automata from constraints obtained from preference queries and feedback. We consider how \ouralgorithm{} learns under summation, discounted summation, product, and classification valuation functions, and show how conjecturing variable equivalence under these valuation functions, guided by a MaxSMT objective, and limiting search depth by iterative deepening is essential for guaranteeing search completeness and minimalism in \ouralgorithm{}. Our empirical results include how \ouralgorithm{} scales, compares learning under different valuation functions, the impact of feedback strength, and the importance of iterative deepening and counterexample table expansion.

\bibliographystyle{splncs04nat}
\bibliography{sample}

\newpage
\appendix
\section{Appendix}
\subsection{Closed and Consistency MaxSMT Objective}
We describe the closed and consistency objective (CC) here. In summary, the purpose of this objective is to encode a greedy preference into the MaxSMT objective for preferring variable equivalences which would make the current table closed and consistent after applying unification, and which would minimize the number of states in the table.

The objective function can be described as explicitly ranking all possible tables resulting from a single step of unification. There are 3 regimes of preferences:
\begin{enumerate}
    \item Closed and consistent tables are ranked highest.
    \item Closed, but not consistent table are ranked second.
    \item Tables which are not closed are ranked the lowest.
\end{enumerate}
Within the 3 regimes, tables are ranked by how many states they contain. Tables with fewer states and ranked higher than tables with more states.

The components of objective are constructed from 3 parts: (a) an expression which counts the number of states in the table, (b) an expression for a closed table, and (c) an expression for a consistent table.

Let the current table have $|S|=U$ upper rows and $|S\cdot\Sigma^I|=L$ lower rows. All expressions are developed based on row equivalence.

\subsubsection{State Minimization Expression} Let $C_n(S)$ be the set of tuples representing subsets of $\orows{S}$ with size $n$. There are $\binom{|S|}{n}$ such tuples in $C_n(S)$ since permutations are excluded. Then let
\begin{align}
\mathbf{M}(S,\Sigma^I) = \sum_{(u_1,u_2)\in C_2(S)} \mathbf{1}_{[u_1=u_2]} + \sum_{(l_1,l_2)\in C_2(S\cdot\Sigma^I)}\mathbf{1}_{[l_1=l_2]}+\sum_{(u,l)\in\orows{S}\times\orows{S\cdot\Sigma^I}}\mathbf{1}_{[u=l]}
\end{align}The expression on the right contains \begin{align}
    \displaystyle N=\frac{U(U-1)+L(L-1)}{2}+UL
\end{align} terms.

\subsubsection{Closedness Expression} For a table to be closed, we require that each row in $\orows{S\cdot\Sigma^I}$ be in $\orows{S}$. We express this requirement through explicit enumeration:
\begin{align}
\mathbf{C}_1(S,\Sigma^I)=\bigwedge_{l\in\orows{S\cdot\Sigma^I}}\left(\bigvee_{u\in\orows{S}} l=u\right)
\end{align}

\subsubsection{Consistency Expression} For a table to be consistent, we encode the definition of a consistent table using implications. Let $\mathbf{r}(\cdot)=\orow{\cdot}$ for succinctness. Then let $\mathbf{C}_2(S,\Sigma^I)=$
\begin{align}
    \bigwedge_{(\mathbf{r}(s_1),\mathbf{r}(s_2))\in C_2(S)}\left((\mathbf{r}(s_1)\equiv\mathbf{r}(s_2))\implies \bigwedge_{(\mathbf{r}(s_1\cdot\sigma),\mathbf{r}(s_2\cdot\sigma))\in C_2(S\cdot\Sigma^I)}(\mathbf{r}(s_1\cdot\sigma)\equiv\mathbf{r}(s_2\cdot\sigma))\right)
\end{align}
\subsubsection{Full Objective} We can now express the fill objective in terms of the above components:
\begin{align}
    \Lambda' = \argmax_{(v_1,\dots,v_m)}\left(\mathbf{M}(S,\Sigma^I)+\mathbf{1}_{[\mathbf{C}_1(S,\Sigma^I)]}\cdot\left[(N+1)\mathbf{1}_{[\mathbf{C}_2(S,\Sigma^I)]} + (N+1)\right]\right)
\end{align}
This formulation splits the solutions into the three regimes described above. Note the bounds $0\leq\mathbf{M}(S,\Sigma^I)\leq N$ imply that each solution will be ranked according to the regimes as appropriate.

\subsubsection{Combining Objectives} When we combine objectives, we do so lexographically. First, we solve for $\Lambda'$. Then, given $\Lambda'$, we solve Equation \ref{eq:max-smt} for $\Lambda$.

\subsection{Counterexample Ablation Results}
\begin{figure}
    \centering\tiny{Number of MaxSMT Objective Solves}\\
    \includegraphics[width=0.19\textwidth]{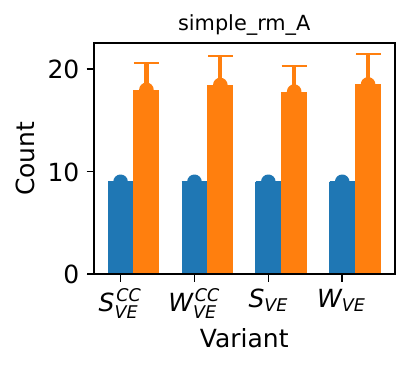}
    \includegraphics[width=0.19\textwidth]{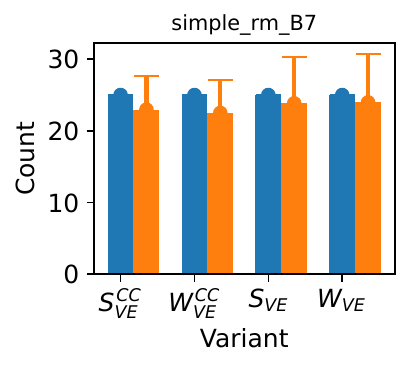}
    \includegraphics[width=0.19\textwidth]{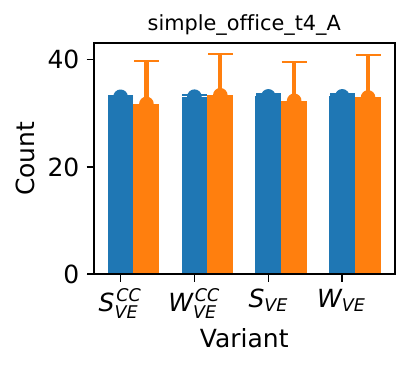}
    \includegraphics[width=0.19\textwidth]{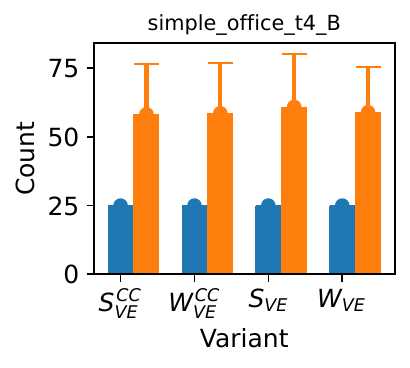}
    \includegraphics[width=0.19\textwidth]{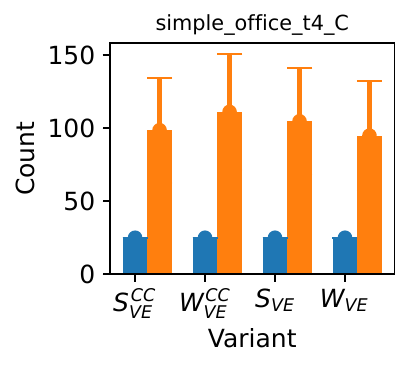}\\
    \includegraphics[width=0.4\textwidth]{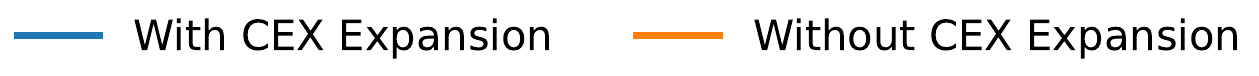}
    \caption{Counterexample Expansion Ablation, measuring number of MaxSMT objective solves, under non-discounted summation valuation, with IDS enabled. This measures the number of $(\mathcal{E,O})$ pairs considered.}
    \label{fig:ablation}
\end{figure}

\subsection{Full Results}
We provide the full range of plots for environments tested in.

\begin{figure}
    \centering
    \includegraphics[width=\textwidth]{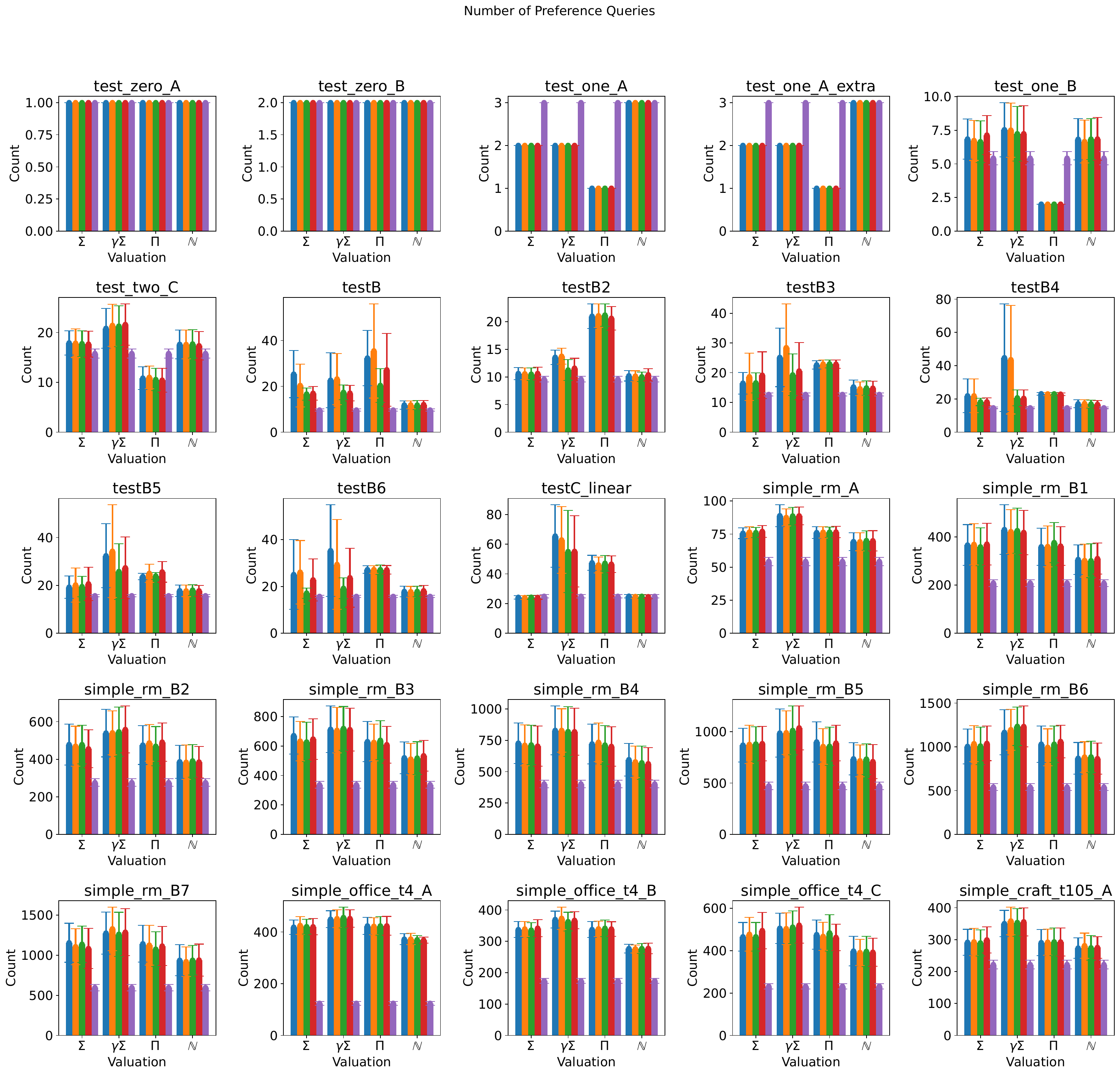}
    \includegraphics[width=0.5\textwidth]{figs/appendix/yes-cex-expansion/yes-b-cex-yes-ids/quintic_legend.pdf}
    \caption{Number of Preference Queries, under CEX Expansion and IDS}
    \label{fig:appendix-pref-q}
\end{figure}

\begin{figure}
    \centering
    \includegraphics[width=\textwidth]{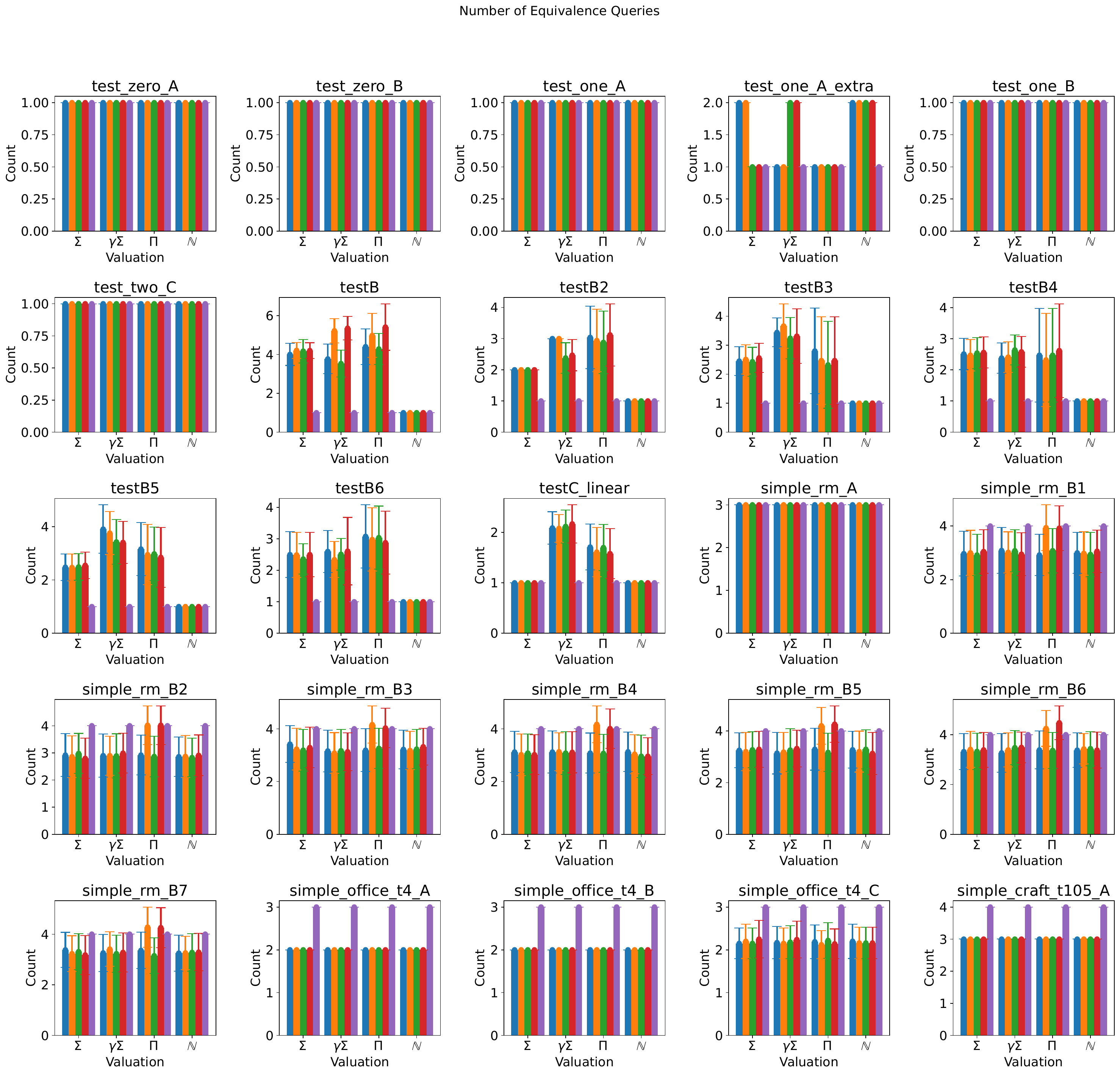}
    \includegraphics[width=0.5\textwidth]{figs/appendix/yes-cex-expansion/yes-b-cex-yes-ids/quintic_legend.pdf}
    \caption{Number of Equivalence Queries, under CEX Expansion and IDS}
    \label{fig:appendix-equiv-q}
\end{figure}

\begin{figure}
    \centering
    \includegraphics[width=\textwidth]{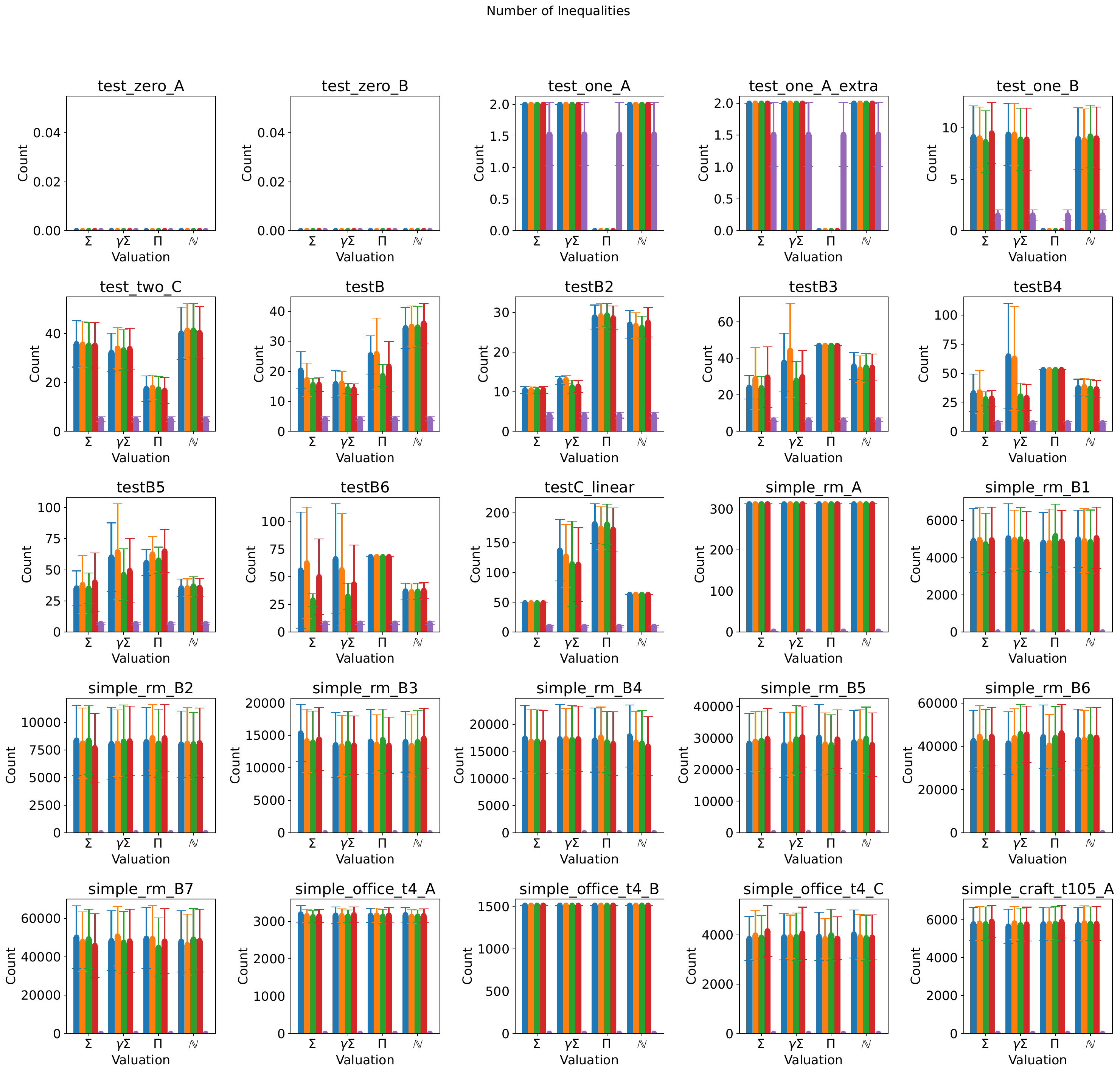}
    \includegraphics[width=0.5\textwidth]{figs/appendix/yes-cex-expansion/yes-b-cex-yes-ids/quintic_legend.pdf}
    \caption{Number of Inequalities Gathered, under CEX Expansion and IDS}
    \label{fig:appendix-num-ineq}
\end{figure}

\begin{figure}
    \centering
    \includegraphics[width=\textwidth]{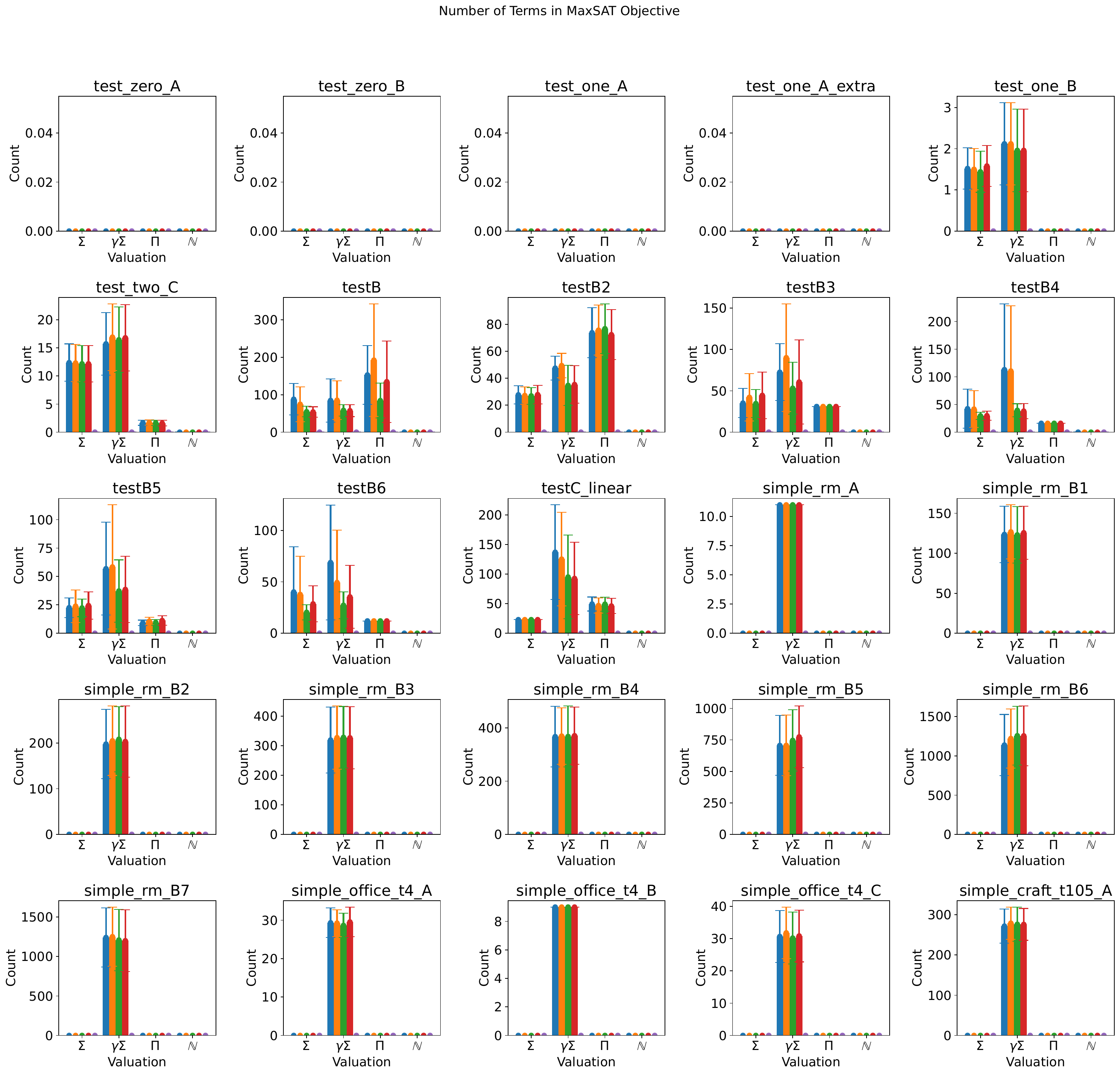}
    \includegraphics[width=0.5\textwidth]{figs/appendix/yes-cex-expansion/yes-b-cex-yes-ids/quintic_legend.pdf}
    \caption{Size of the MaxSMT Objective, under CEX Expansion and IDS}
    \label{fig:appendix-max-sat-objective-size}
\end{figure}

\begin{figure}
    \centering
    \includegraphics[width=\textwidth]{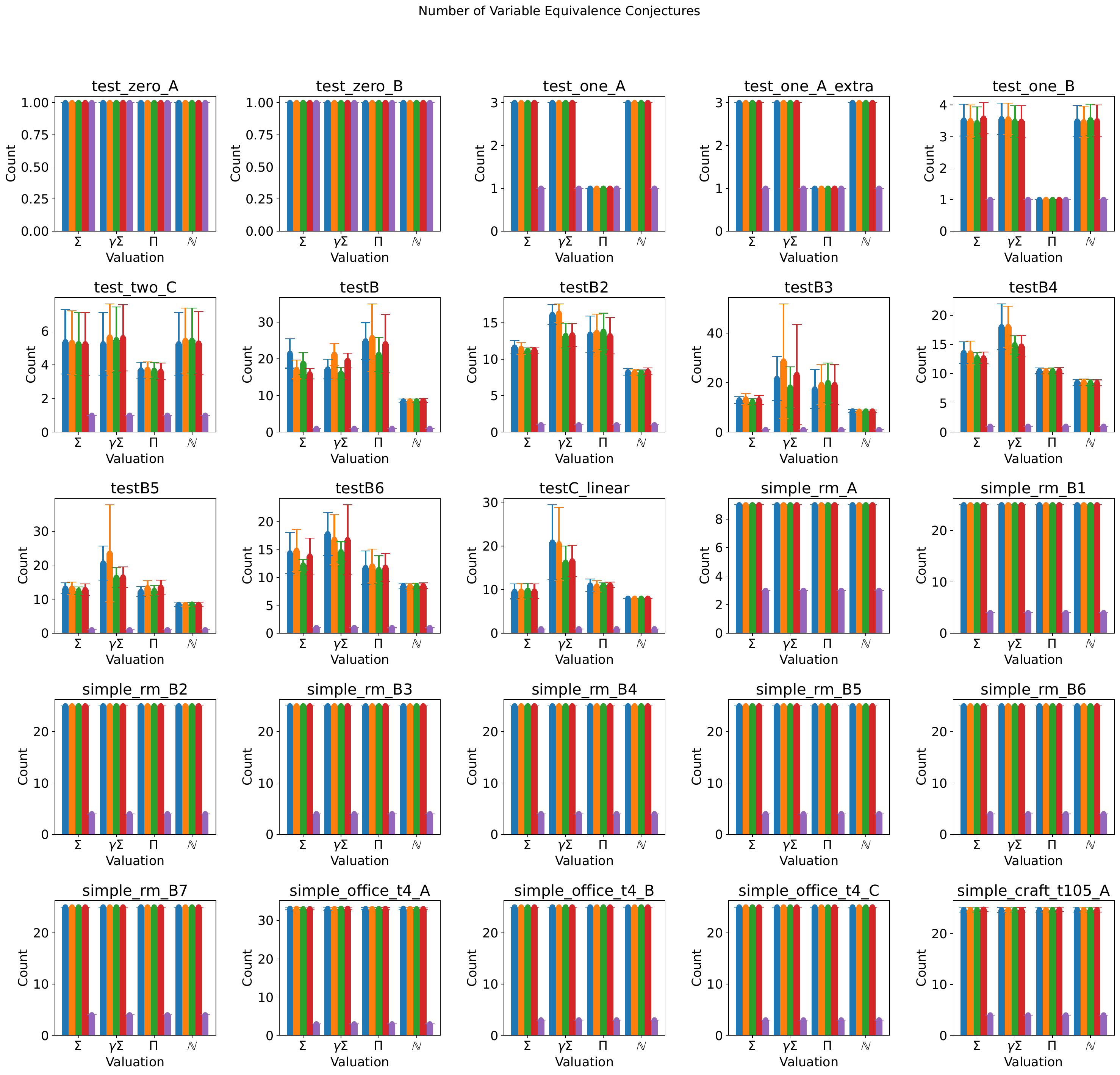}
    \includegraphics[width=0.5\textwidth]{figs/appendix/yes-cex-expansion/yes-b-cex-yes-ids/quintic_legend.pdf}
    \caption{Number of Variable Equivalence Conjectures, under CEX Expansion and IDS}
    \label{fig:appendix-num-ve-conjectures}
\end{figure}

\begin{figure}
    \centering
    \includegraphics[width=\textwidth]{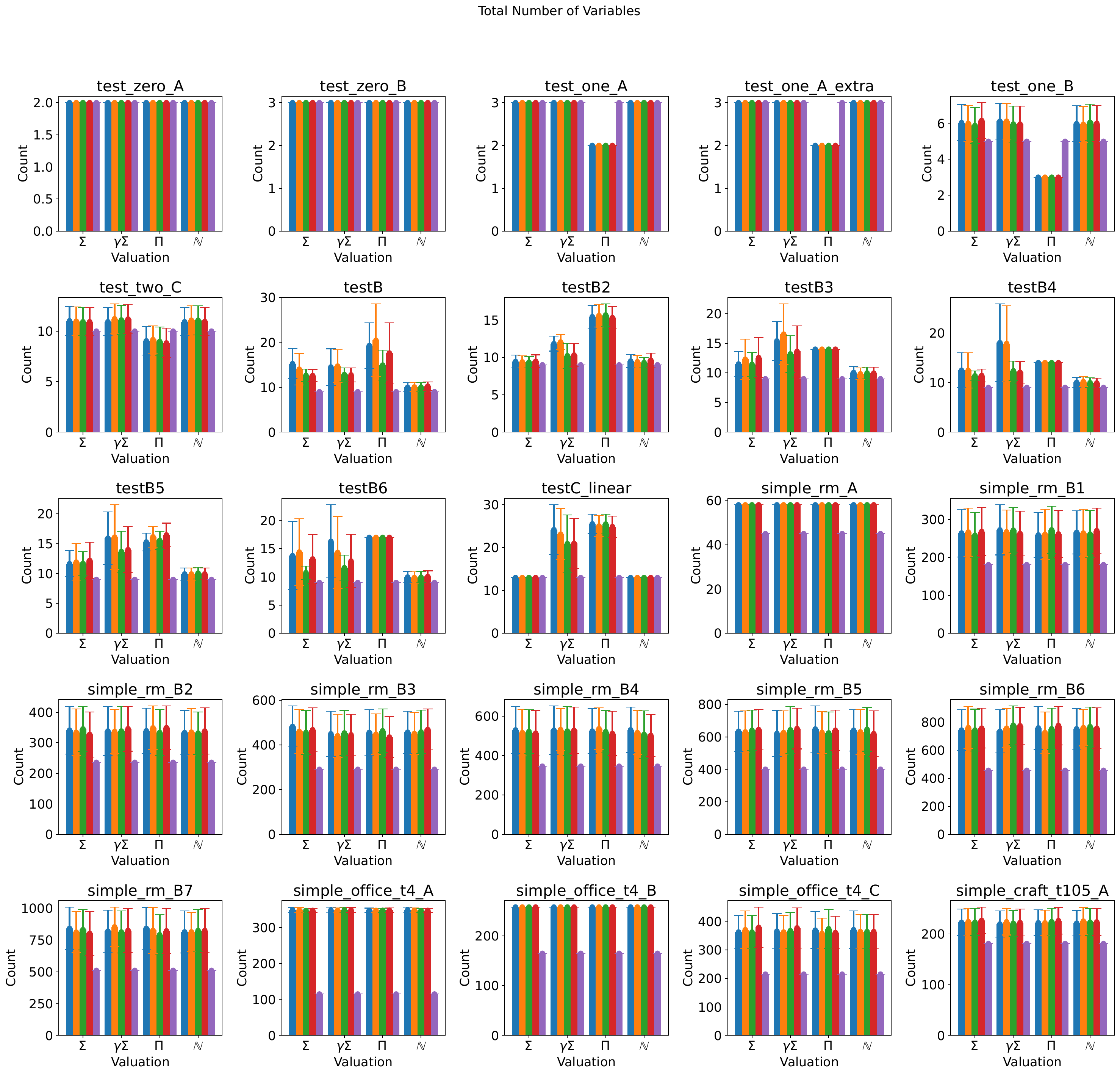}
    \includegraphics[width=0.5\textwidth]{figs/appendix/yes-cex-expansion/yes-b-cex-yes-ids/quintic_legend.pdf}
    \caption{Number of Variables, under CEX Expansion and IDS}
    \label{fig:appendix-num-variables-dimension}
\end{figure}

\begin{figure}
    \centering
    \includegraphics[width=\textwidth]{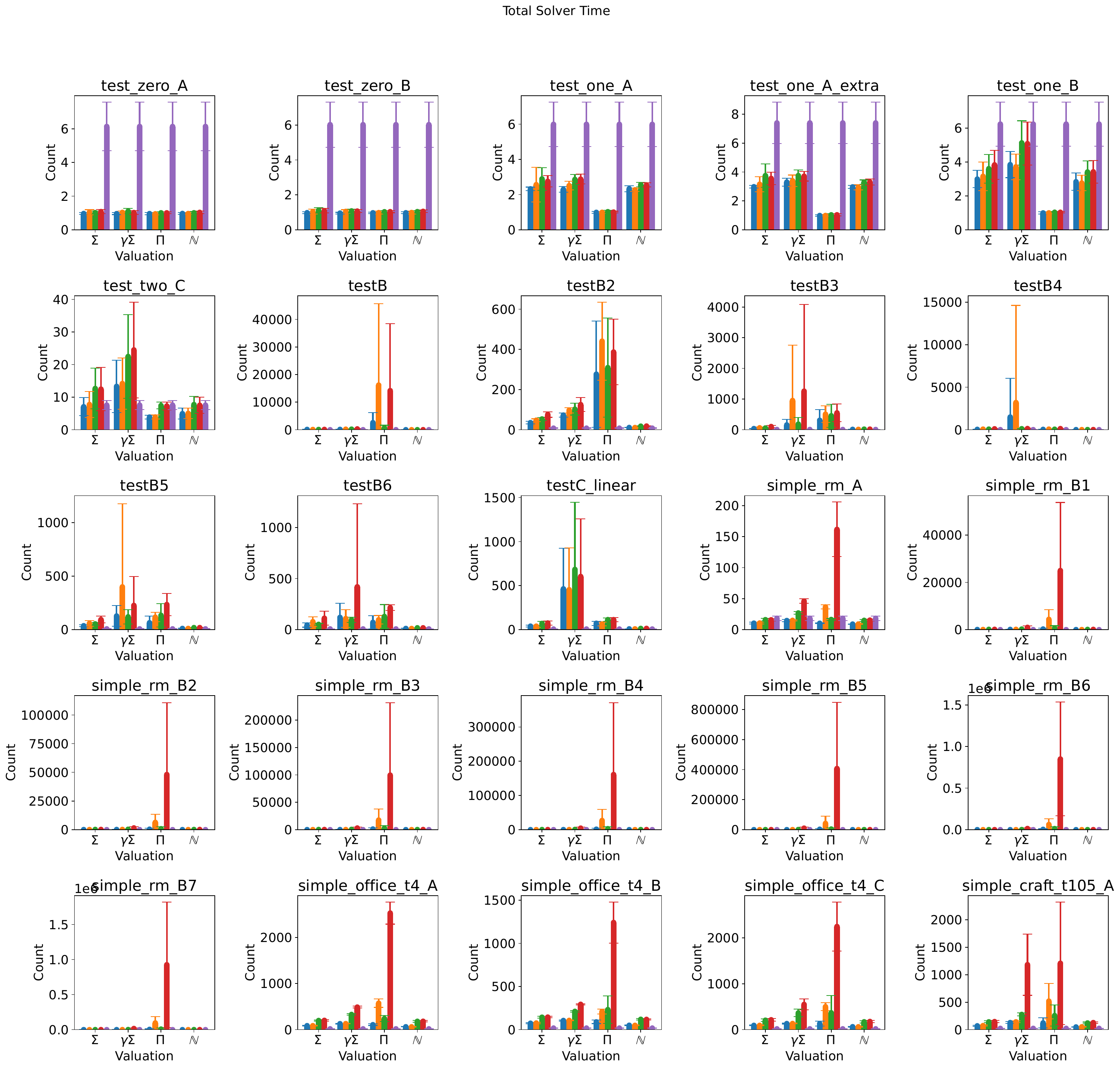}
    \includegraphics[width=0.5\textwidth]{figs/appendix/yes-cex-expansion/yes-b-cex-yes-ids/quintic_legend.pdf}
    \caption{Total Solver Time in Milliseconds, under CEX Expansion and IDS}
    \label{fig:appendix-solver-time}
\end{figure}

\begin{figure}
    \centering
    \includegraphics[width=\textwidth]{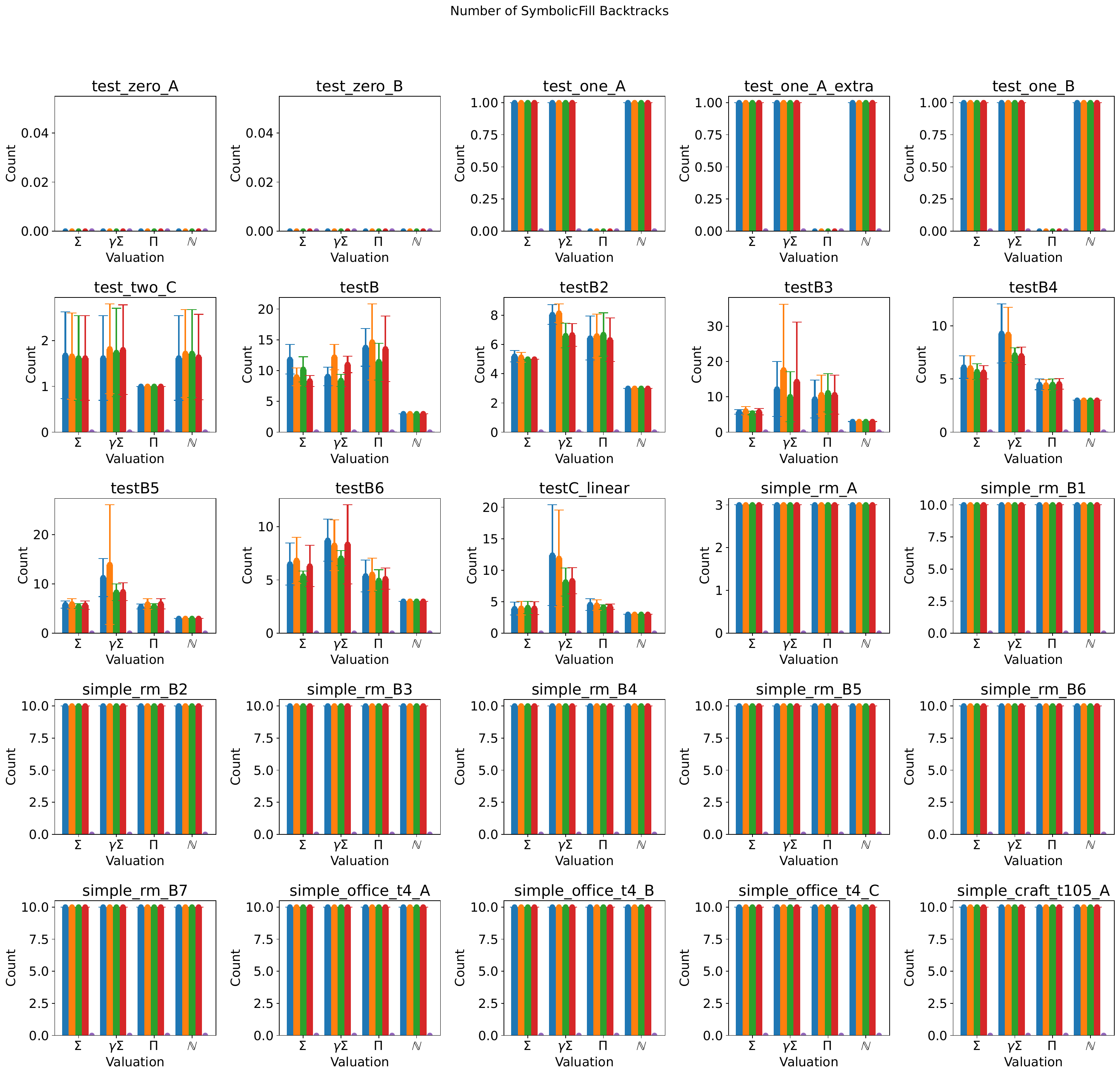}
    \includegraphics[width=0.5\textwidth]{figs/appendix/yes-cex-expansion/yes-b-cex-yes-ids/quintic_legend.pdf}
    \caption{Total Number of Backtracks, under CEX Expansion and IDS}
    \label{fig:appendix-backtracks}
\end{figure}
\end{document}